\documentclass[acmsmall, screen]{acmart}\settopmatter{}
\pdfpagewidth=8.5in
\pdfpageheight=11in 

\setcopyright{rightsretained}
\acmPrice{}
\acmDOI{10.1145/3563354}
\acmYear{2022}
\copyrightyear{2022}
\acmSubmissionID{oopslab22main-p769-p}
\acmJournal{PACMPL}
\acmVolume{6}
\acmNumber{OOPSLA2}
\acmArticle{191}
\acmMonth{10}


\bibliographystyle{ACM-Reference-Format}
\citestyle{acmauthoryear}   


\usepackage{booktabs}   
\usepackage{subcaption} 
\usepackage{amsmath,amssymb,stmaryrd}
\usepackage{amsthm}
\usepackage{enumerate}
\usepackage{bussproofs}
\usepackage{mathtools}
\usepackage{makecell}
\usepackage{comment}
\usepackage{proof}
\usepackage{graphicx}
\usepackage[strings]{underscore}
\usepackage[misc]{ifsym}
\usepackage[shortlabels]{enumitem}
\usepackage{makecell}
\usepackage{dcolumn,caption}
\usepackage{xspace}
\usepackage{colortbl}
\usepackage{hhline}
\usepackage{rotating}
\usepackage{xcolor}
\usepackage{enumerate}
\usepackage{siunitx}
\usepackage[utf8]{inputenc}
\usepackage[T1]{fontenc}
\usepackage{microtype}
\usepackage{flushend}
\usepackage{url}

\newcommand{\commentout}[1]{}

\newcommand{\pldicommentout}[1]{}

\newcommand{\adcomment}[1]{}
\newcommand{\madcomment}[1]{}
\newcommand{\lpcomment}[1]{}
\newcommand{\clcomment}[1]{}

\newcommand{\Mm}{\mathcal{M}}

\newcommand{\R}{\mathcal{R}}

\newcommand{\fs}{\sigma_{0}}

\newcommand{\ite}{\mathsf{ite}}
\newcommand{\nil}{\mathit{nil}}

\newcommand{\halloc}[1]{\Fr^{x;v}}

\newcommand{\folfp}{FO+\textit{lfp} }

\newcommand{\modelsfo}{\models_{\textsf{FO}}}
\newcommand{\modelslfp}{\models_{\textsf{LFP}}}

\newcommand{\Fr}{\textit{Sp}}

\newcommand{\lft}{\mathit{left}}
\newcommand{\rght}{\mathit{right}}

\newcommand{\htree}{\mathit{htree}}
\newcommand{\hbst}{\mathit{hbst}}

\newcommand{\bst}{\mathit{bst}}
\newcommand{\key}{\mathit{key}}

\newcommand{\tree}{\mathit{tree}}
\newcommand{\dagraph}{\mathit{dag}}

\newcommand{\lst}{\mathit{list}}

\newcommand{\dlist}{\mathit{dlist}}
\newcommand{\slist}{\mathit{slist}}
\newcommand{\rlist}{\mathit{rlist}}
\newcommand{\sdlist}{\mathit{sdlist}}
\newcommand{\lseg}{\mathit{lseg}}
\newcommand{\nxt}{\mathit{next}}

\newcommand{\minr}{\mathit{minr}}
\newcommand{\maxr}{\mathit{maxr}}
\newcommand{\leftmost}{\mathit{leftmost}}
\newcommand{\slst}{\mathit{slist}}

\newcommand{\hlist}{\mathit{hlist}}

\newcommand{\oddlist}{\mathit{odd\mathrm{-}list}}

\newcommand{\reach}{\mathit{reach}}

\newcommand{\keys}{\mathit{keys}}

\newcommand{\maxheap}{\mathit{maxheap}}
\newcommand{\slseg}{\mathit{slseg}}

\makeatletter
\newtheorem*{rep@theorem}{\rep@title}
\newcommand{\newreptheorem}[2]{%
\newenvironment{rep#1}[1]{%
 \def\rep@title{#2 \ref{##1}}%
 \begin{rep@theorem}}%
 {\end{rep@theorem}}}
\makeatother

\newreptheorem{theorem}{Theorem}

\makeatletter
\newtheorem*{rep@lemma}{\rep@title}
\newcommand{\newreplemma}[2]{%
\newenvironment{rep#1}[1]{%
 \begin{rep@lemma}}%
 {\end{rep@lemma}}}
\makeatother

\newreptheorem{lemma}{Lemma}

\newcommand{\sygus}{SyGuS\textrm{ }}

\newcommand{\lemmas}{\mathcal{L}}
\newcommand{\goodmod}{\mathcal{G}}
\newcommand{\axioms}{\mathcal{A}}
\newcommand{\ind}{\hspace{3ex}}
 
\newcommand{\falsemodel}{\mathit{Type\!-\!\!1}}
\newcommand{\lfpmodel}{\mathit{Type\!-\!\!2}}
\newcommand{\pfpmodel}{\mathit{Type\!-\!\!3}}

\newcommand{\bffalsemodel}{\mathbf{Type\!-\!\!1}}
\newcommand{\bflfpmodel}{\mathbf{Type\!-\!\!2}}
\newcommand{\bfpfpmodel}{\mathbf{Type\!-\!\!3}}
 
\newcommand{\pfp}{\mathsf{PFP}}
\newcommand{\ip}{\mathcal{IP}}
\newcommand{\modR}[1][R]{\mathcal{M}_{#1}}
\newcommand{\constR}[1][R]{\mathcal{C}_{#1}}

\newcommand{\recsym}{\R^{\mathit{rec}}}
\newcommand{\recdef}{\mathcal{D}}
\newcommand{\forecdef}{\recdef^{\mathit{fp}}}
\newcommand{\lfpdef}{:=_{\textit{lfp}}}

\newcommand{\SQI}{SQI}

\newcommand{\gf}{{\mathit{gf}\!}}

\newcommand{\natproofsalgo}{\textsf{\SQI}}
\newcommand{\synthesismodule}{\textsf{Synthesize}}
\newcommand{\provabilitycounterex}{\textsf{Counterexample}}
\newcommand{\truecounterex}{\textsf{BoundedCex}}

\newcommand{\fossil}{\textsc{FOSSIL}}

\newcommand{\fossilseq}{\textsc{FOSSIL-IP}}



\newcommand{\highlight}[1]{#1}

\newcommand{\mypara}[1]{\smallskip\noindent\emph{\textbf{#1}.\ }}

\newcommand{\egrecdef}{\recdef_*}
\newcommand{\eggoal}{\alpha_*}
\newcommand{\eglemma}{L_*}
\clubpenalty = 10000
\widowpenalty = 10000
\displaywidowpenalty = 10000

\begin{document}

\title{Model-Guided Synthesis of Inductive Lemmas for FOL with Least Fixpoints}         


\author{Adithya Murali}
\orcid{0000-0002-6311-1467}             
\affiliation{
  \department{Department of Computer Science}             
  \institution{University of Illinois, Urbana-Champaign}           
  \country{USA}                   
}
\email{adithya5@illinois.edu}          

\author{Lucas Pe\~{n}a}
\orcid{0000-0002-1898-439X}             
\affiliation{
  \department{Department of Computer Science}             
  \institution{University of Illinois, Urbana-Champaign}           
  \country{USA}                   
}
\email{lpena7@illinois.edu}         

\author{Eion Blanchard}
\orcid{0000-0002-8270-8226}             
\affiliation{
  \department{Department of Mathematics}             
  \institution{University of Illinois, Urbana-Champaign}           
  \country{USA}                   
}
\email{eionmb2@illinois.edu}          

\author{Christof L\"oding}
\orcid{0000-0002-1529-2806}             
\affiliation{
  \department{Department of Computer Science}             
  \institution{RWTH Aachen}           
  \country{Germany}                   
}
\email{loeding@automata.rwth-aachen.de}          

\author{P. Madhusudan}
\orcid{0000-0002-9782-721X}             
\affiliation{
  \department{Department of Computer Science}             
  \institution{University of Illinois, Urbana-Champaign}           
  \country{USA}                   
}
\email{madhu@illinois.edu}          

\begin{abstract}
Recursively defined linked data structures embedded in a pointer-based heap and their properties are naturally expressed in pure first-order logic with least fixpoint definitions (FO+lfp) with background theories. Such logics, unlike pure first-order logic, do not admit even complete procedures. In this paper, we undertake a novel approach for synthesizing inductive hypotheses to prove validity in this logic. The idea is to utilize several kinds of finite first-order models as counterexamples that capture the non-provability and invalidity of formulas to guide the search for inductive hypotheses. We implement our procedures and evaluate them extensively over theorems involving heap data structures that require inductive proofs and demonstrate the effectiveness of our methodology.
\end{abstract}

\begin{CCSXML}
<ccs2012>
   <concept>
       <concept_id>10011007.10011074.10011099.10011692</concept_id>
       <concept_desc>Software and its engineering~Formal software verification</concept_desc>
       <concept_significance>500</concept_significance>
       </concept>
   <concept>
       <concept_id>10003752.10003790.10003794</concept_id>
       <concept_desc>Theory of computation~Automated reasoning</concept_desc>
       <concept_significance>500</concept_significance>
       </concept>
   <concept>
       <concept_id>10003752.10003790.10002990</concept_id>
       <concept_desc>Theory of computation~Logic and verification</concept_desc>
       <concept_significance>500</concept_significance>
       </concept>
   <concept>
       <concept_id>10010147.10010257</concept_id>
       <concept_desc>Computing methodologies~Machine learning</concept_desc>
       <concept_significance>300</concept_significance>
       </concept>
 </ccs2012>
\end{CCSXML}

\ccsdesc[500]{Software and its engineering~Formal software verification}
\ccsdesc[500]{Theory of computation~Automated reasoning}
\ccsdesc[500]{Theory of computation~Logic and verification}
\ccsdesc[300]{Computing methodologies~Machine learning}

\keywords{Inductive Hypothesis Synthesis, Learning Logics, Counterexample-Guided Inductive Synthesis, First Order Logic with Least Fixpoints, Verifying Linked Data Structures}  

\maketitle

\section{Introduction}\label{sec:intro}

One of the key revolutions that has spurred program verification is \emph{automated reasoning of logics}. 
Particularly, in deductive verification, engineers write inductive invariants that punctuate recursive loops and contracts for methods and then use logical analysis to reason with \emph{verification conditions} that correspond to correctness of small, loop-free snippets. In this realm, automatic reasoning in combinations of quantifier-free theories using SMT solvers has been particularly useful; in turn, these tools are based on the logics having a \emph{decidable} validity (and satisfiability) problem~\cite{bradley07,barrett11}.

However, reasoning even with loop-free snippets of programs is challenging when the code manipulates \emph{linked data structures embedded in pointer-based heaps}.
Data structures are finite but unbounded structures that are often characterized using recursive definitions whose semantics are defined using both quantifiers and least fixpoints.
\pldicommentout{data structure manipulation in imperative programming languages that modify the heap is complex as reasoning with them also requires capturing their footprint on the heap in order to do frame reasoning.}


First-order logic with least fixpoint definitions (FO+\textit{lfp}) which accesses various background sorts or theories (e.g., integers and sets) is a powerful extension of first-order logic (FOL) that can define data structures and express their properties.
For example, fairly expressive dialects of separation logic have been translated to FO+\textit{lfp} in order to aid automated reasoning~\cite{calcagno05,madhusudan12,qiu13,pek14,suter10,loding18,murali20}. 
The focus of this paper is automated reasoning for first-order logics with least fixpoint definitions or recursive definitions that utilize SMT solvers for quantifier-free reasoning. 

The novel automation of FO+\textit{lfp} reasoning that we propose is a counterexample-guided synthesis of inductive lemmas utilizing \emph{complete} procedures for pure first-order (FO) reasoning.
Our framework requires the FO reasoning procedure to be able to compute counterexample models.
The technique we present can be parameterized over any FO reasoning engine able to provide counterexamples for provability. In this paper, we use a particular technique called \emph{natural proofs} that are based on systematic quantifier instantiation~\cite{loding18} and that is able to provide such counterexamples. 

\commentout{
Pure first-order logics (even with a finite or computable set of axiomatizations) admit complete proof systems (by G\"odel's completeness theorems~\cite{enderton}),
and validity of formulas is recursively enumerable (though undecidable). 
However, not all complete algorithmic procedures for first-order reasoning can return counterexamples.
In this paper, we are primarily interested in extending a particular complete reasoning technique for FOL based on 
systematic quantifier instantiation 
followed by SMT reasoning of the resulting quantifier-free formulas~(this was developed in the context of \emph{natural proofs}; see \cite{madhusudan12,pek14,qiu13,suter10,loding18}.
Given a theorem $\alpha$ in FO to prove valid, we take its negation, Skolemize it to get a universally quantified formula, and then instantiate universally quantified variables using a finite set of terms, systematically. Each such instantiation results in a \emph{quantifier-free} formula which admits decidable satisfiability, and it is guaranteed that if $\alpha$ is valid, there will be some instantiation of terms that results in an unsatisfiable formula for the negation of $\alpha$. 

Recent work shows that systematic quantifier instantiation (SQI) (and even a more economic instantiation called formula-based quantifier instantiation) 
is a complete procedure for FOL in a multisorted universe where formulas are restricted to quantify only over an uninterpreted foreground sort, but equipped with background sorts that are constrained by theories such as arithmetic and sets~\cite{loding18} that support quantifier-free decision procedures (in particular SMT solvers). 
In this paper, we use a particular technique based on systematic quantifier instantiation~\cite{loding18} that has the ability to provide counterexamples for non-provability. However, our technique can be understood parameterized over any such FO reasoning engine that is able to provide counterexamples. In this paper, we use a particular technique based on systematic quantifier instantiation~\cite{loding18} that has the ability to provide counterexamples for non-provability. However, our technique can be understood parameterized over any such FO reasoning engine that is able to provide counterexamples. 

A salient of this approach is that for any particular instantiation, when the negation of the theorem is satisfiable, the SMT solver produces a concrete counterexample witnessing the failure of proving the theorem with respect to this instantiation. Note that these counterexamples do not show the sentence to be invalid, but only witness failure of the proof using this particular instantiation. 
}

\mypara{The Anatomy of Proofs for \folfp: Proofs by Induction} Unlike FOL, FO+\textit{lfp} does not admit complete procedures
\footnote{Quick proof: define a ``number line'' (discrete linear order) using a constant $0$ and a unary function $s$ (representing successor) with FO axioms expressing that the successor of no element is $0$ and that the successor of no two different elements can be the same; second, encode the reachable configurations of a 2-counter machine (which is Turing powerful) as a relation defined using a least fixpoint, and express non-halting of the machine using this relation. This proof in fact shows that even a single recursive definition leads to validity being not recursively enumerable.}
(i.e., sound proof systems for FO+\textit{lfp} cannot admit proofs for every theorem).
Indeed, on a number line, true addition and true multiplication over the natural numbers are definable using \textit{lfp}. Hence by G\"odel's incompleteness theorem~\cite{enderton}, even quantifier-free logic with recursive definitions has an undecidable validity (and satisfiability) problem.

Humans usually prove properties involving recursive definitions (or least fixpoints) using \emph{induction}. 
We consider logics with recursive definitions, where each recursive definition is of the form 
$\forall \overline{x}.~ R(\overline{x}) \lfpdef \rho(\overline{x})$.
Theorems are expressed using first-order logic over a signature that includes these recursive definitions. 
An inductive proof of a theorem typically involves sub-proofs, 
which each identify a fairly strong property (the induction hypothesis) and its proof (the \emph{induction step}). 

In this paper, we use a more general notion of induction proofs based on pre-fixpoints, not requiring a concept of size or measure based on natural numbers upon which to induct.
We defer this notion until later and instead encourage the reader to simply think of
an inductive hypothesis as an \emph{inductive lemma} and
the induction step of the lemma as the \emph{pre-fixpoint (PFP) of the lemma}.

The main proposal of this paper is to build automated reasoning for FO+\textit{lfp} with background theories using a combination of
(a) complete procedures for FO reasoning to prove theorems and PFPs of lemmas, and
(b) counterexample-guided expression synthesis for synthesizing lemmas (i.e., induction hypotheses) that aid in proving a theorem.


We observe that proofs of the induction step (PFP) of the formula can be seen as reasoning using \emph{pure first-order logic reasoning without induction}. 
More precisely, we can think of a proof of a theorem in FO+\textit{lfp} as split into sub-proofs mediated by an \emph{induction principle} but otherwise consisting of pure FO reasoning. The induction principle says that proving the PFP (induction step) of any lemma proves the lemma. 




We can thus view the structure of an induction proof of a theorem $\alpha$ as identifying a finite set $\mathcal{L} = \{ L_1, \ldots, L_n\}$ of lemmas such that:
\begin{itemize}

    \item For each $i \in \{1,\dots,n\}$, there is a purely FO proof of $PFP(L_i)$ using
    the earlier lemmas $L_1, \ldots, L_{i-1}$ as assumptions, and 
     
    \item There is a purely FO proof of $\alpha$ with the lemmas from $\mathcal{L}$ as assumptions.
\end{itemize}

Notice that proofs of the above form lack any explicit induction proof and the purely FO proofs work under the assumption that each inductive relation $R$ is interpreted as a fixpoint definition (not least fixpoint) of the form 
$\forall \overline{x}.~ R(\overline{x}) \iff \rho(\overline{x})$ rather than
$\forall \overline{x}.~ R(\overline{x}) \lfpdef \rho(\overline{x})$.
The fact that proving $PFP(L_i)$ suffices as a proof of $L_i$ is implicit and marks the only appeal to the least fixpoint semantics of recursive definitions to argue that the above constitutes a proof of the theorem.


This view of an inductive proof of an FO+\textit{lfp} formula as pure FO proofs mediated by induction principles suggests a ``synthesis $+$ reasoning'' methodology: 
(a) synthesize lemmas that are \emph{likely} to be true and inductively provable, and (b) prove theorems and lemmas using pure FO reasoning.
%

\highlight{We emphasize that proving inductive lemmas followed  by pure FO reasoning to prove a theorem is itself not new.}
%
For example, the induction axiom schema in Peano arithmetic is:\\
\centerline{$ \forall \overline{y}.~(\varphi(0, \overline{y}) \wedge (\forall x.~ \varphi(x, \overline{y}) \Rightarrow \varphi(S(x), \overline{y}))) \Rightarrow \forall x.~\varphi(x, \overline{y}) $}\\
for \emph{any} formula $\varphi$.
A proof using this axiom can hence be seen as divining formulas $\varphi$ and proving lemmas of form $\forall x.~\varphi(x, \overline{y})$ by using purely first-order logic over the non-inductive axioms to prove $\forall \overline{y}.~(\varphi(0, \overline{y}) \wedge (\forall x.~ \varphi(x, \overline{y}) \Rightarrow \varphi(S(x), \overline{y})))$.

\highlight{The idea of finding proofs by induction by 
synthesizing inductive hypotheses and proving them using simpler non-inductive reasoning is also not new. This technique is prevalent, for example, in program verification. In this setting, inductive hypotheses are written as loop invariants or method contracts that capture invariants of program states or effects of calling procedures. Synthesizing such invariants and contracts has been explored using a combination of inductive synthesis and reasoning (see work on the ICE framework~\cite{garg2014ICE}, for example, that explicitly takes this approach, and also the related work section). The novelty of our work lies in realizing this technique for proving theorems in FO+lfp using finite models that witness invalidity and non-provability for counterexample-guided synthesis.} 



\mypara{Synthesizing Inductive Lemmas} The primary technical contributions of this paper lie in
techniques for synthesizing lemmas that (a) can be proved inductively, with their own statement as the induction hypothesis, and (b) aid the proof of a target theorem. We embrace the paradigm of 
\emph{counterexample-guided synthesis}
that has met impressive success in automating 
verification and synthesis (e.g., in finding predicates for abstraction~\cite{slam,namjoshi00} or in program synthesis through the CEGIS paradigm~\cite{alur15, Armando2007Stencils, ArmandoSolarLezamaThesis}).
The salient feature of our technique is the use of \emph{finite first-order models} that act as counterexamples to guide the search for lemmas.

Suppose a theorem $\alpha$ in \folfp is desired to be proved valid.
Our technique for automated quantified FO reasoning (without least fixpoints), called \emph{natural proofs}, uses systematic quantifier instantiation followed by SMT-based validation of the resulting quantifier-free formula~\cite{loding18,pek14,qiu13}. 
Let $\SQI(k)$ be the method that systematically instantiates terms of depth $k$ for quantified variables then checks satisfiability of the resulting quantifier-free formula (the latter is a decidable problem).
As a simple consequence of Herbrand's theorem and compactness, we know that this method is complete in the sense that if $\beta$ is a valid formula in FOL, then there is some $k$ for which $\SQI(k)$ will prove the validity of $\beta$.

At any point of the lemma synthesis procedure, we would have synthesized a set of potentially useful lemmas already proved valid and then seek a new lemma to help prove $\alpha$.

We utilize \emph{three\,} kinds of counterexample models to guide the search for useful and provable lemmas. 
In our iterative framework for synthesizing useful and provable lemmas, a prover and a synthesizer interact: the synthesizer proposes lemmas, and the prover provides constraints for synthesizing new lemmas.
When the synthesizer proposes a lemma, the lemma can be (a) valid and provable using $\SQI(k)$ reasoning using existing lemmas, (b) invalid but easily shown to be so using a small model, or (c) valid or invalid, but in either case not provable using $\SQI(k)$ and existing lemmas.
Note that (a) and (c) cover all cases, and (b) overlaps with (c). 

These correspond to the three kinds of counterexamples, which we now name.
$\falsemodel$ models guide the search toward lemmas that help prove the theorem $\alpha$ and are obtained from the failure to prove $\alpha$ using FO reasoning via $\SQI(k)$.
$\lfpmodel$ models are small, simple counterexamples to validity of proposed lemmas and are obtained by searching for bounded models using SMT solvers.
$\pfpmodel$ models show non-provability of lemmas and are obtained from failure to prove the PFP of lemmas using FO reasoning via $\SQI(k)$.
By utilizing these three kinds of counterexample models, we narrow and guide the search space for lemma synthesis.

\pldicommentout{
\begin{itemize}
    \item \emph{$\falsemodel$\, counterexample  models guide search toward lemmas that help prove the theorem.} When we fail to prove the target theorem $\alpha$ from current lemmas (i.e., fail to prove $L_1 \wedge \ldots L_i \Rightarrow \alpha$) using bounded quantifier instantiation and pure FO reasoning $\textit{FO-Reason}(k)$, it results in a satisfiable \emph{quantifier-free} formula from which we can extract a \emph{finite first order model}. We formulate synthesis constraints that demand that new lemmas help in some way in proving $\alpha$ by insisting that new lemmas must ensure
    that particular model will not be a counterexample model for non-provability of $\alpha$.

    \item 
    \emph{$\lfpmodel$ models are small counterexamples to lemmas}. 
    Lemmas can often be proved to be invalid using small counterexamples. We propose 
    methods for constructing bounded counterexamples that refute lemmas, and insist that new lemmas propose are not invalidated by these counterexample models to guide search for new lemmas.
    %
    
    \item 
    \emph{$\pfpmodel$ counterexample models show non-provability of lemmas.} In this last case, the prover is unable to prove a proposed lemma (i.e., prove its PFP) using current proven lemmas and  $\textit{FO-Reason}(k)$ reasoning. Note that we have no idea whether these lemmas are valid or not--- we just know we can't prove their PFP. However, since the reasoning works using bounded quantifier instantiation, it turns out that we can always find a \emph{finite} model that serves as a witness as to why the lemma wasn't provable (even if it is valid!). 
    We add constraints that demand that newly synthesized lemmas cannot have PFPs that are invalidated by this counterexample model.
\end{itemize}
}


The main contribution of this paper is \fossil, a novel algorithmic framework for synthesizing lemmas that uses such counterexamples and proves both lemmas and target theorems using FO reasoning.
In each round, the algorithm begins with a target theorem $\alpha$ and tries proving it using the lemmas synthesized and proved valid so far.
If the proof of $\alpha$ fails, this failure precipitates a $\falsemodel$\, counterexample which will be used to guide the search towards lemmas that do help prove the theorem $\alpha$.
The lemma synthesis phase follows, generating a lemma that satisfies the $\falsemodel$\, counterexample and then attempting to prove the validity of its PFP.
If the proof of the PFP fails, this failure yields either a $\lfpmodel$\, counterexample (which is a bounded model) if possible or otherwise a $\pfpmodel$\, counterexample to show non-provability of the PFP.
We continue to seek new lemmas guided by these three kinds of counterexamples until a valid lemma is found, at which point we add the new lemma to our set of valid lemmas. We recurse, trying to prove the target theorem $\alpha$.
Off-the-shelf synthesis tools do not scale when employed in our framework; however, our synthesis engine works efficiently via constraint solving with SMT solvers, carefully representing counterexamples as \emph{ground formulas} and formulating synthesis constraints as ground constraints.


\pldicommentout{The first and third kind of counterexamples are sufficient to build a counterexample based procedure, and this leads to one version of our algorithm. The second kind of counterexample may not necessarily exist--- it may be that all models of size $n$, for a small $n$, are not counterexamples to a lemma or it may be that the lemma is valid but not provable (it may require more instantiation depth or require more inductive lemmas). However, it turns out that small counterexamples, when they exist, can be very useful, and we incorporate them in a hybrid technique to get the main version of our algorithm called \fossil.
}



\mypara{Background Theories and Relative Completeness} The techniques for inductive reasoning that we develop in this paper are more involved than as described above.
First, many applications, such as program verification, require handling of domains that are constrained to satisfy certain theories, such as arithmetic and sets
(sets allow the expression of collections such as ``the set of keys stored in a list'' in heap-based verification and ``the set of heap locations that constitute a list'' in heaplets for frame reasoning).
Consequently, our framework maintains a \emph{foreground sort} modeling the heap with pointers as well as multiple background sorts, with the background sorts constrained by theories and that admit Nelson-Oppen style decision procedures for \emph{quantifier-free} reasoning.
In such settings, the work in~\cite{loding18} proved that for formulas that quantify only over the foreground sort (i.e., only involving quantification over locations of the heap), systematic  quantifier instantiation is still complete.
Moreover, satisfiability of quantifier-free formulas after instantiation are supported by SMT solvers, which can also return the three kinds of counterexamples we seek.


Second, we carefully build lemma search to admit relative completeness.
We show that if there is a proof of a theorem involving finitely many independently provable lemmas (in the grammar of lemmas provided by the user),
then our procedure is guaranteed to eventually find one.
More precisely, there are two \emph{infinities} to explore---one is the search for lemmas and the other is the instantiation depth $k$ chosen for finding proofs.
As long as our procedure fairly dovetails between these two infinities, it is guaranteed to find a proof. 



\mypara{Evaluation} We implement and evaluate our procedure for a logic that combines an uninterpreted foreground sort with background sorts, where the background sorts have quantifier-free fragments that  are decidable using SMT solvers. 
Our tool framework can employ generic syntax-guided synthesis (SyGuS) engines as well as a custom synthesis tool we built; 
both of these can synthesize lemmas using FO countermodels that are encoded using logical constraints.

We perform an extensive evaluation on two suites of benchmarks:
one of 50 theorems on data structure verification and another of 673 synthetically generated theorems. 
Our experiments give evidence that the first-order counterexample-based techniques proposed in this paper are effective in synthesizing inductive lemmas and proving theorems. Apart from evaluating the efficiency of our tool, we evaluate the importance of several design decisions and optimizations in our tool. In particular, we study the efficacy of using various kinds of counterexamples and compare our custom synthesis engine with off-the-shelf state-of-the-art synthesis engines. 


Lemma synthesis has been studied for related logics, in particular for logics over algebraic datatypes (ADTs)~\cite{reynolds15, Weikun19} and separation logic~\cite{slcomp19, trung17}. Though these logics are very different in expressive power and comparisons across tools are hard, we provide a comparison of our tool against tools for these logics on our benchmark theorems using appropriate encodings whenever feasible. 

\mypara{Contributions} The main contributions of this paper are: (1) a counterexample-guided synthesis framework, \fossil, for synthesizing inductive lemmas for proving validity in \folfp~ with relative completeness guarantees, (2) the formulation of three kinds of counterexamples that guide synthesis towards lemmas that are relevant to the theorem, lemmas that hold at least on small models, and are provable using induction, (3) efficient synthesis algorithms using specifications formulated as ground formulas, and (4) an implementation and evaluation of \fossil~ on two benchmark suites of theorems in the domain of heap data structures\footnote{Our benchmarks and tool can be found at: \url{https://github.com/muraliadithya/FOSSIL}}.

\commentout{
\section{Technical Overview of Inductive Lemma Synthesis} \label{sec:overview}

We now give a technical overview of our lemma synthesis technique for performing induction proofs, which refines the overview discussed in the Introduction. We will limit ourselves in this section to overviewing the technique when working over FO+\textit{lfp} without any background sorts/theories (i.e., where functions/relations/constants are all uninterpreted but where equality does have the natural interpretation).

Assume for simplicity that we have only inductive relations given using recursive definitions (no recursive functions) and assume that we want to prove a statement of the form
$\varphi: \forall \overline{x}.~ R(\overline{x}) \Rightarrow \psi(\overline{x})$, where $R$ has a recursive definition $R(\overline{x}) :=_\textit{lfp} \rho(\overline{x}, R)$. 

In order to prove a statement of the above form by induction, it turns out that it suffices to prove the following formula, which we call the $PFP$ of $\varphi$:
 $$\forall \overline{x}.~\rho(\overline{x}, R \leftarrow R \wedge \psi) \rightarrow \psi(\overline{x}) $$
 The above was introduced in~\cite{loding18} and is similar Park induction~\cite{esik97}, but a bit stronger; intuitively, viewing $\rho$ as a monotonic function on the lattice of sets of tuples (w.r.t $\subseteq$), the above says that $\psi$ is a pre-fixpoint of this function. Hence if it is true, then $\varphi$ must be true. 

We look at the sub-proofs that are proved by induction as \emph{inductive lemmas} that are proved using the induction principle as above. A proof of a theorem $\alpha$ in FOL+\textit{lfp} is really a proof aided by a set of lemmas $\mathcal{L}$ where each lemma is proved by proving the PFP formula corresponding to it using \emph{purely first order logic}. Furthermore, $\alpha$ itself is proved using \emph{pure first order logic} using the lemmas in $\mathcal{L}$.

More precisely, the structure of the proof of $\alpha$ requires
synthesizing a finite set of lemmas $\mathcal{L} = \{ L_1, \ldots, L_n\}$. where each $L_i$ is of the form $\forall \overline{x}.~R_i(\overline{x}) \Rightarrow \psi_i(\overline{x})$, where each $R_i$ is a recursively defined relation and $\psi_i$ is an arbitrary FO formula over $\overline{x}$. We then prove
using purely first order logic:
\begin{itemize}
    \item For each $i \in [1,n]$, 
     $\{L_1, \ldots, L_{i-1}\} \vdash_{\textit{FOL}}  \forall \overline{x}.~ \rho_i(\overline{x}, R_i \leftarrow R_i \wedge \psi_i) \Rightarrow \psi_i$, 
     where $\rho_i$ is the recursive definition of $R_i$
    \item 
             $\{L_1, \ldots, L_n\} \vdash_{\textit{FOL}} \alpha$
\end{itemize}

The pure FO logic proofs work under the assumption that each inductive relation $R$ is simply a fixpoint definition (not least fixpoint) of the form 
$\forall \overline{x}.~ R(\overline{x}) \iff \rho(\overline{x})$.
The fact that proving the PFP for $L_i$ is sufficient proof of $L_i$ is implicit. 

\paragraph{Synthesizing Inductive Lemmas}
As mentioned in the introduction, we develop a counterexample driven synthesis procedure for finding lemmas to prove the theorem, where counterexamples are \emph{FO models}. We give an overview of these counterexample models here. 

Let us fix a set of recursive definitions $\mathcal{R}$ and a theorem $\alpha$ we want to prove. Let each inductive relation $R_i \in \mathcal{R}$ have a recursive definition 
$R_i(\overline{x}) \lfpdef \rho_i(\overline{x}, \mathcal{R})$. 
Let us assume that $\alpha$ is indeed valid.

Our choice for FO-reasoning are those based on quantifier instantiation. Given a FO formula to check validity, we can instead check for satisfiability of its negation, and Skolemize this negated formula to remove existential quantification. As mentioned before, let the method $\textit{FO-Reason(k)}$ denote the method that systematically instantiates terms of depth $k$ for quantified variables, followed by checking satisfiability of the resulting quantifier-free formula (which is a decidable problem). We know that the method is complete, in the sense that if $\beta$ is a valid formula in FOL, then there is some $k$ for which $\textit{FO-Reason(k)}$ will prove it valid. 

The first attempt would be to prove $\alpha$ using purely FO reasoning. We do this by transforming each inductive relation definition into a first-order constraint:
 $\forall \overline{x}.~ R(\overline{x}) \iff \rho_i(\overline{x}, \mathcal{R})$.
Note that these constraints only insist that the interpretation of the relations in $\mathcal{R}$ are such that they are \emph{fixpoints} of the equation, rather than least fixpoints. Since least fixpoints are also fixpoints, if we prove $\alpha$ is valid under these constraints, we would have prove $\alpha$ also for the true definitions of $\mathcal{R}$. This is essentially the technique suggested by \emph{natural proofs}~\cite{madhusudan12,qiu13,pek14,suter10,loding18}. 

However, $\alpha$ may not be provable as its FO approximation may not be valid even though $\alpha$ is valid under least fixpoint semantics for inductive relations. If this happens, then we know that there is a model $M$ where each of the relations in $\mathcal{R}$ are interpreted so that they satisfy they are fixpoints but not least fixpoints, such that $\neg \alpha$ evaluates to true. Our goal is to extract a lemma guided by this model. Note that since the model $M$ is obtained as a counterexample for a quantifier-free formula, we are guaranteed that we can find a finite model $M$. The evaluation of recursive definitions in this model respect FO definitions and not the least fixpoint definitions, and hence we call $M$ a pseudomodel. 

For a lemma $L$ to be (uniquely) helpful in proving the theorem, we would like $L \Rightarrow \alpha$, and even provable using instantiation using terms of depth $k$. Consequently, it turns out that the lemma should \emph{not} hold in \emph{at least} one of the instantiated terms in the pseudomodel. This can be formulated as a logical constraint that guides the search towards lemmas relevant to prove the theorem.

Once we synthesize a lemma $L$ that satisfies the above constraints, we would want it of course to be a valid lemma, and hence attempt to prove it by proving the PFP formula corresponding to it using  $\textit{FO-Reason}(k)$ reasoning. If the lemma is provable, then we have found a valid lemma that has some promise towards proving the theorem, and we can add it to the bag of lemmas and repeat checking whether the theorem is now provable.

However, if the lemma is \emph{not provable}, we would like to move on to another lemma. Note that the lemma may in fact be valid but not provable for many reasons (it may require more lemmas proved by induction, or even be FO-valid but require more instantiation, or even be valid but not provable by induction as FO+\textit{lfp} is inherently incomplete).
Instead of simply skipping the lemma and moving to the next, we propose to use the counterexample model obtained that witnesses why the lemma is currently \emph{not provable} in order to guide the synthesis of new lemmas. We consider the negation of the PFP corresponding to the lemma, and check for satisfiability. Note that since we assumed that the lemma is of the form 
$\forall \overline{x}.~R(\overline{x}) \Rightarrow \psi(\overline{x})$, 
where $\psi$ is quantifier-free, its negation can be written as a quantifier-free formula for checking satisfiability, and we are guaranteed to get a finite model satisfying it. We now insist that new lemmas $L'$ that we synthesize should not have a corresponding PFP that is false on this model.


}

\section{Preliminaries and Problem Definition} 
\label{sec:prelim}

In this section, we define the first-order logic framework
we work with (first-order logic with recursive definitions that have \textit{lfp} semantics) and give the problem definition for solving theorems in FO+\textit{lfp} using synthesis of inductive
lemmas and first-order proofs.


\subsection{First-Order Logic over Theory-Constrained Background Sorts}
\label{sec:uct}
The first-order logics (with and without recursive definitions) that we work with are over a multisorted universe that has a single distinguished \emph{foreground} sort 
and multiple \emph{background} sorts. 
The universes of all these sorts are pairwise disjoint. 
The foreground sort and the functions and relations that refer to it (as part of the domain or codomain) are entirely uninterpreted (no axioms that constrain them).
Background sorts and functions and relations involving only background sorts are constrained by certain theories.

Formally, we work with a signature of the form $\Sigma = (S; C; F; \R)$, where
$S$ is a finite non-empty set of sorts.
$C$ is a set of constant symbols, where each $c \in C$ has some sort $\sigma \in S$.
$F$ is a set of function symbols, where each function $f \in F$ has a type of the
form $\sigma_1 \times \ldots \times \sigma_m \rightarrow \sigma$ for some $m$, with
$\sigma_i, \sigma \in S$. 
$\R$ is a set of relation symbols, where each relation $R \in \R$ has
a type of the form $\sigma_1 \times \ldots \times \sigma_m$. 

We assume a designated foreground
sort, denoted by $\fs$. 
All other sorts in $S$ are called background sorts, and for
each such background sort $\sigma$ we allow the constant symbols of type $\sigma$, function symbols that have type
$\sigma^n \rightarrow \sigma$ for some $n$, and relation symbols that have
type $\sigma^m$ for some $m$ to be constrained using an arbitrary
theory $T_\sigma$. 
All other functions and relations that involve either the foreground sort or multiple background sorts are assumed to be uninterpreted (not constrained by any theory). 
We consider standard first-order logic (FO) over these multisorted signatures, with standard syntax and semantics, under the combined theories~\cite{enderton}. 

\mypara{Counterexamples} We require that validity of \emph{quantifier-free} logic under the combined theories is \emph{decidable}. 
Furthermore, when a quantifier-free formula is not valid, we require this decision procedure to provide models that show satisfiability of the negation of the formula. \highlight{The truth value of the quantifier-free formula only depends on a finite portion of the model (corresponding to the terms used in the formula, since the formula is quantifier-free). This finite portion can be described by a conjunction of atomic ground formulae.
We require models to be given indirectly by such \emph{conjunctive ground formulae}.
Formally, given a quantifier-free formula $\varphi$ that is satisfiable, we require that the solver return a conjunctive ground formula $\gf$ such that (a) $\gf$ is satisfiable and (b) $\gf \Rightarrow \varphi$ is \emph{valid}. If $\varphi$ contains variables, then these are interpreted as or replaced by Skolem constants that are part of the signature of $\gf$. Intuitively, $\gf$ indicates the existence of one or more models such that $\varphi$ is satisfied on all of them. The formula $\gf$ encodes enough information about these models to ensure that $\varphi$ is satisfied in them. The following example illustrates these ideas.

\begin{example}[Counterexample models as conjunctive ground formulas]
Consider the formula $(f(x)=y) \Rightarrow y>3$ where $f: \sigma_0 \rightarrow \mathit{Int}$ is an uninterpreted function, 
$x$ is of the sort $\sigma_0$, and $y$ is of sort $\mathit{Int}$. 
This formula is invalid, and we can witness the satisfiability of its negation $(f(x) = y) \land \neg(y>3)$ using a model $\Mm$ where $x$ is interpreted to an element $u$ and $f(u)$ is interpreted to $2$. $\Mm$ can be captured using the formula $\gf: f(x)=2$. Indeed, one can see that $\big(f(x) = 2\big) \Rightarrow \big((f(x) = y) \land \neg(y >3)\big)$ is a valid formula. It is also imminent that $\gf$ is satisfiable since $\Mm$ realizes it.
\end{example}
%
}

In our tools, we work with certain Nelson-Oppen combinable decidable theories~\cite{nelson80,demoura08,bradley07,nelson-oppen1979} (in particular linear arithmetic over integers, sets of integers). These are  supported by SMT solvers that guarantee both decidability of quantifier-free formulae as well as model generation as above.\pldicommentout{\footnote{The model generation reported by SMT tools follow a slightly different format but can be readily converted to formulas as we require.}}

\pldicommentout{
\smallskip
In summary, we work with a multisorted first order logic where there is an uninterpreted foreground sort and several background sorts constrained by theories. The quantifier-free fragment of FOL over the combined theory is decidable and supports model generation. And we will consider validity of quantified FOL only for formulas that quantify over the foreground sort.
}

\subsection{First-Order Logic with Recursive Definitions (\folfp)}
\label{sec:prelim-folfp}
Our target theorems are in a dialect of first-order logic over a multisorted universe (universes similar to the one above) but with recursive definitions that have least fixpoint semantics. 

We identify a subset $\recsym$ of the relational symbols $\R$ and endow them with definitions; these relations are not directly interpreted by models, rather they are defined uniquely by their definitions. In our work we assume that these recursive definitions only relate elements of the foreground sort. The set of recursive definitions $\recdef$ for the symbols $\recsym$ are of the form

\centerline{$R(\overline x) \lfpdef \rho_R(\overline{x})$}

\noindent
where $R \in \recsym$, $\overline{x}$ are variables over the foregreound sort, and $\rho_R(\overline{x})$ is a quantifier-free first-order logic formula. Note that a definition $\rho_R$ can utilize all the sorts and functions/relations in the model. We also assume that there is only one definition for each $R \in \recsym$.

\highlight{To ensure the well-definedness of definitions, 
we assume that the symbols in $\recsym$ are ordered in layers, and that each $R' \in \recsym$ that occurs in the definition of $R$ is either in a smaller layer, or it is in the same layer and only occurs positively (under an even number of negations) in the definition of $R$ (similar to stratified Datalog~\cite{gradel-modeltheorybook}). The semantics of recursively defined relations is given by the
least fixpoint (\textit{lfp}) that satisfies the relational equations (the condition that each recursive definition only refers positively to recursively defined relations in the same layer ensures that the least fixpoint exists~\cite{tarski1955})\footnote{\highlight{Our definition of \folfp is similar to the one used in Finite Model Theory: see Libkin~\cite{libkin04}, Chapter 10. Notably, our notion of recursive definitions is more restrictive than general \folfp because recursive definitions should only be universally quantified and only over the foreground sort. This technical condition enables us to build effective complete FO validity procedures: see Section~\ref{sec:natural-proofs}.}}.

Our theoretical treatment assumes that there is only one layer for simplicity. Therefore, each recursive definition only mentions other recursively defined relations positively. 
However, the results also hold for several layers of recursive definitions, and indeed our experiments utilize them.} 

\begin{example}[Linked Lists]
Let $n$ be a unary function symbol modeling a pointer of type $\sigma_0 \rightarrow \sigma_0$, i.e., from the foreground sort to the foreground sort. Let $\textit{nil}$ be a constant of sort $\sigma_0$, and $\lst$ be a unary relation with the recursive definition

\centerline{
$\lst(x) \lfpdef \ite(x = \nil, \mathit{true}, \lst(n(x)))$}

Then, in any model $\Mm$ 
where $\lst$ is interpreted using its \textit{lfp} definition, $\lst$ holds precisely for those elements that are the head of a \emph{finite} linked list with $n$ as the next pointer. FOL without \textit{lfp} cannot describe such linked lists~\cite{libkin04}. Note that unlike Algebraic Datatypes (ADTs), if $\lst(x) \wedge \lst(y) \wedge x \not= y$ holds in a model, the lists pointed to by $x$ and $y$ are not necessarily disjoint and could ``merge'' in the model. We can also model disjointedness using heaplets, as we show in the following example.
\end{example}

\begin{example}[Trees and Heaplets]
\label{ex:tree-lfpdef}

Consider the following recursive definition for a predicate $\mathit{tree}(x)$ which expresses that $x$ is the root of a binary tree on pointers $l$ (left) and $r$ (right):
\begin{flalign*}
\mathit{tree}(x) \lfpdef \mathit{ite}(x\!=\!nil,\, \mathit{true},\,& \mathit{tree}(\textit{l}(x)) \wedge \mathit{tree}(\textit{r}(x))&\\[-2pt]
&\land\, \mathit{htree}(\textit{left}(x)) \cap \mathit{htree}(\textit{right}(x)) \!\!=\!\! \emptyset&\\[-2pt]
&\land\, \textit{Singleton}(x) \cap \left(\mathit{htree}(\textit{l}(x)) \cup \mathit{htree}(\textit{r}(x))\right) \!\!=\!\! \emptyset)&\\
\mathit{htree}(x) \lfpdef ite(x=nil,\, \emptyset,\,& \mathit{Singleton}(x) \cup \mathit{htree}(\textit{l}(x)) \cup 
        \mathit{htree}(\textit{r}(x)) )
\end{flalign*}


        
Observe again that since our data structures are unlike ADTS, pointers $l$ and $r$ may possibly point to the same element (``merge'') in arbitrary heaps/models.
Therefore, to define trees we define a recursive definition for the partial function expressing the \emph{heaplet} of a tree $\mathit{htree}: \sigma_0 \rightarrow \sigma_{sl}$ where $\sigma_{sl}$ is a background theory of sets of locations with which we demand that the left and right subtrees are disjoint. This  is similar to constraints used in Separation Logic to express trees~\cite{reynolds02}.
\end{example}


\noindent
We now state the usual notion of validity/entailment in \folfp in the language introduced above.

\begin{definition}[\folfp Entailment]
For a sentence $\alpha$ and a set $\Gamma$ of formulas we write $\Gamma \cup \recdef \modelslfp \alpha$ if $\alpha$ is true in all models of $\Gamma$ using the lfp semantics for relations with definitions given in $\recdef$.
\end{definition}

\noindent
We conclude this section with some remarks.

\mypara{First-Order Abstractions of Recursive Definitions}
Given an FO+\textit{lfp} formula, we can sometimes prove it valid using pure FOL. 
We can do this by interpreting recursive definitions in $\recdef$ to be
\emph{fixpoint definitions} (as opposed to \textit{lfp}).
More precisely, we constrain the relations using  FOL as 
$\forall \overline{x}.~ R(\overline x) \leftrightarrow \rho_R(\overline{x})$.
If $\alpha$ is valid under the fixpoint interpretation of recursive relations, then it is of course valid using least fixpoint interpretation as well, but the converse does not hold. 
Interpreting recursive definitions as fixpoint definitions rather than least fixpoint definitions is hence a form of sound abstraction. 
We write 
$\Phi \cup \forecdef \modelsfo \alpha$ to denote that $\alpha$ is valid
using the FO fixpoint abstractions $\forecdef$ of $\recdef$.

\mypara{Partial Functions} The reader may have observed in Example~\ref{ex:tree-lfpdef} that we presented a recursively defined \emph{function} $\mathit{htree}$. Although we don't allow them in the theoretical treatment, our tools support recursively defined partial functions from the foreground sort to both foreground and background sorts (for modeling heaplets of structures, lengths of lists, heights of trees, etc.). 
However, partial functions can be modeled using two predicates: one recursively defined predicate that captures the domain of the partial function and another predicate defined using only FOL that captures the map of the function. 

\highlight{
\mypara{\folfp~ Fragment} 
In this work we only handle the validity of formulas whose quantification is purely over the foreground sort. This fragment is well suited for the domain of heap verification that we study. We can model the heap as the foreground sort and express recursively defined functions and properties that only quantify over the heap. However, it is not as powerful as full \folfp. For example, the logic cannot talk about array properties (where the array is modeled as a map $f:Int \rightarrow V$ from indices to values in a domain $V$) that quantify over integers, which is a background sort. We also cannot express theorems like ``For every positive integer $n$, there is a linked list of length $n$'' as this requires universal quantification over the background sort. 
These restrictions are important as they allow us to leverage practical complete algorithms~\cite{loding18} for FOL validity for this restricted fragment in implementing the $\fossil$ framework (see Section~\ref{sec:natural-proofs}).}

\subsection{The Inductive Lemma Synthesis Problem for Proving  FO+\textit{lfp} Formulas}
\label{sec:induction-proofs}

In this work we develop algorithms that prove an \folfp\  formula $\alpha$ valid given a finite set $\axioms$ of axioms and a set $\recdef$ of recursive definitions with \textit{lfp} semantics. We want to show that $\axioms \cup \recdef \modelslfp \alpha$ mainly using first-order reasoning. Clearly, if $\axioms \cup \forecdef \modelsfo \alpha$, then $\axioms \cup \recdef \modelslfp \alpha$ as argued above. 

\medskip
\noindent
\highlight{We use the following running example to illustrate ideas developed in the sequel:

\begin{example}[Running Example]
\label{ex:running}
Consider the recursively defined relation $\lseg(x,y)$ defining linked list \emph{segments} between locations $x$ and $y$ on the pointer $n$:

\centerline{
$\lseg(x,y) \lfpdef \ite(x = y, \mathit{true}, \lseg(n(x),y))$}

\smallskip
\noindent
Now, consider the following Hoare Triple:


\centerline{$\{\texttt{@pre:} lseg(\texttt{x,y1})\} \texttt{ if (y1 == nil) then y2 := y1; else y2 := y1.n;} \{\texttt{@post:} lseg(\texttt{x,y2})\} $}

\smallskip
\noindent
The above triple generates the following Verification Condition (VC) $\eggoal$:

\centerline{
$\lseg(x,y_1) \Rightarrow \bigg(\ite(y_1 = \nil, y_2 = y_1, y_2 = n(y_1)) \Rightarrow lseg(x,y_2) \bigg)$}

We denote by $\egrecdef$ the singleton set containing the definition of $\lseg$. We will use the problem of checking $\egrecdef \modelslfp \eggoal$ as a running example in this paper. Note that $\eggoal$ is actually valid in \folfp~ but it is not FO-valid, i.e., $\egrecdef \modelslfp \eggoal$ holds but $\egrecdef^{\mathit{fp}} \modelsfo \eggoal$ does not. This makes the problem a good candidate for lemma synthesis. We describe a run of our algorithm on this example in Section~\ref{sec:illustrative-example}. 
\end{example}
}

\noindent
The overall idea in our approach is to use intermediate inductive lemmas to find an FO proof of the goal. 
\highlight{We handle a particular fragment of \folfp in our work. First, we require the goal $\alpha$ to have quantification only over the foreground sort. Second, we only consider lemmas of the form $L = \forall \overline{x}.~ R(\overline{x}) \Rightarrow \psi(\overline{x})$ for variables $\overline{x}$ over the foreground sort, a quantifier-free formula $\psi$, and a recursively defined relation $R \in \recsym$. Finally, we prove lemmas valid using a specific form of induction called the pre-fixpoint (PFP) formula.} 
Given a lemma $L$ of the form above, the PFP of $L$ expresses that $R \wedge \psi$ is a pre-fixpoint of the definition of $R$:

\centerline{$\textit{PFP}(L) := \forall \overline{x}. \rho_R(\overline{x}, R \land \psi) \Rightarrow \psi(\overline{x})
$}

\noindent
where $\rho_R(\overline{x}, R \land \psi)$ is the formula
obtained from $\rho_R(\overline{x})$ by replacing every occurrence of
$R(t_1, \ldots, t_k)$ for terms $t_1, \ldots,t_k$ in $\rho_R$ by
$\psi(t_1, \ldots,t_k) \land R(t_1, \ldots, t_k)$.
It turns out that if $PFP(L)$ is FO-valid, then $L$ is a valid \folfp formula, as the following theorem states:

\begin{theorem}
\label{thm:ip-correctness}\cite{loding18}
If $\axioms \!\cup\! \forecdef \!\modelsfo\! \textit{PFP}(L)$, then $\axioms\! \cup\! \!\recdef\! \modelslfp L$.
\end{theorem}

\noindent
We use the above formalism to define the notion of an \emph{inductive lemma}, as well as the notion of a sequence of lemmas that prove a theorem using FO reasoning.

\begin{definition}[Inductive Lemmas]
\label{def:lem-ind-proof}
A lemma $L$ is inductive for $\axioms \cup \forecdef$ if $\axioms \cup \forecdef \modelsfo PFP(L)$. If $\axioms$ and $\recdef$ are clear from the context, we omit them and just say that $L$ is inductive.
\end{definition}

\highlight{
\begin{example}[Running Example: Inductive Lemma]
\label{ex:running-pfp}
Consider in the setting of Example~\ref{ex:running} the following lemma $\eglemma$:
\begin{equation*}
\forall x,y_1,y_2.\, \lseg(x,y_1) \Rightarrow \bigg(\lseg(y_1,y_2) \Rightarrow \lseg(x,y_2)\bigg)\tag{$\eglemma$}
\end{equation*}

\noindent
which expresses that if we have a list segment pointed to by $x$ until $y_1$, as well as one pointed to by $y_1$ until $y_2$, then $x$ points to a list segment until $y_2$. It turns out that $\eglemma$ is inductive i.e., $\egrecdef^{\mathit{fp}} \modelsfo PFP(\eglemma)$. In other words, the $PFP$ of the lemma is provable in pure FOL, without induction, and with FO abstractions of the definitions (fixpoint instead of least fixpoint).

The crucial part of the proof is the following 
subformula of $PFP(\eglemma)$:

\centerline{$\forall x,y_1,y_2.\,\big(\lseg(y_1,y_2) \Rightarrow \lseg(n(x),y_2)\big) \Rightarrow \big(\lseg(y_1,y_2) \Rightarrow \lseg(x,y_2)\big)$}

\noindent
which is valid given $\egrecdef^{\mathit{fp}}$ since, according to the definition of $\lseg$, if $lseg(n(x),y_2)$ holds then $\lseg(x,y_2)$ also holds (in the non-degenerate case). 
\end{example}
}




\noindent
We now define the notion of proving a theorem using lemmas as well as the synthesis problem that it poses which we tackle in this work.

\begin{definition}[Sequential Lemmas that Prove a Theorem]
\label{def:thm-seq-proof}
A sequence $(L_1, \ldots, L_n)$ of lemmas provides an inductive proof of $\alpha$ if $\axioms \cup \forecdef \cup \{L_1, \ldots, L_n\} \modelsfo \alpha$
and for each $1 \leq i \leq n$, $L_i$ is inductive for $\axioms \cup \forecdef \cup \{L_1, \ldots, L_{i-1}\}$ (i.e., $\axioms \cup \forecdef \cup \{L_1, \ldots,L_{i-1}\} \modelsfo PFP(L_i)$).
\end{definition}


\smallskip
\begin{definition}[\textbf{Sequential Lemma Synthesis Problem}]~
\label{def:seqlemsynth}
   Given a grammar $G$ for expressing lemmas and a theorem $\alpha$, find a sequence of lemmas admitted by $G$ that provides an inductive proof of $\alpha$ (as in Definition~\ref{def:thm-seq-proof}).
\end{definition}

\mypara{Independently Proven Lemmas} We can also define a simpler synthesis problem corresponding to a weaker class of inductive proofs. Specifically, we can require a set of lemmas that are \emph{independently} proven inductive and help prove a theorem:

\begin{definition}[Independent Lemmas that Prove a Theorem]
\label{def:thm-ind-proof}
A set $\{L_1, \ldots, L_n\}$ of lemmas provides an inductive proof of $\alpha$ if $\axioms \cup \forecdef \cup \{L_1, \ldots, L_n\} \modelsfo \alpha$
and for each $1 \leq i \leq n$, $\axioms \cup \forecdef \modelsfo PFP(L_i)$.
\end{definition}

The difference between the two classes of proofs is that the inductiveness of lemmas in a sequential proof can depend on previous lemmas. As one might expect, the notion of proof using independent lemmas is strictly weaker than the one that uses a sequence of lemmas. 
\highlight{We conclude this section with the running example.

\begin{example}[Running Example: Lemma Proving a Theorem]
\label{ex:running-thm-and-lemma}
Consider $\eglemma$ and $\eggoal$ introduced earlier in the running example. Now, observe that $\egrecdef^{\mathit{fp}} \cup \{\eglemma\} \modelsfo \eggoal$. This is because the crucial part of the validity of $\eggoal$ is the following formula:

\centerline{$\lseg(x,y_1) \Rightarrow \bigg((y_1 \neq \nil \land y_2 = n(y_1)) \Rightarrow lseg(x,y_2)\bigg)$}

\noindent
which captures the `else' case of the $\ite$ subformula of $\eggoal$ (see Example~\ref{ex:running}). We can see that $\eglemma$ entails the above formula in FO since (informally) $y_2 = n(y_1)$ is a special case of $lseg(y_1,y_2)$. Combined with the fact that $\eglemma$ is inductive (Example~\ref{ex:running-pfp}), we have that $\eglemma$ proves $\eggoal$ in the sense of Definition~\ref{def:thm-seq-proof}. We illustrate a run of our synthesis algorithm that proves $\eggoal$ by synthesizing $\eglemma$ in Section~\ref{sec:illustrative-example}.
\end{example}
}

In Section~\ref{sec:algorithm} we present our core algorithm $\fossil$ for solving the \emph{sequential lemma synthesis problem}. This algorithm, apart from being sound in producing sequential lemmas that prove the theorem, is accompanied by a relative completeness result:
it is guaranteed to find a proof as long as there is a set of \emph{independent} lemmas that prove the theorem. 

\subsection{Background: First-Order Validity using Systematic Quantifier Instantiation}
\label{sec:natural-proofs}


In this section we describe the Systematic Quantifier Instantiation (\SQI) mechanism for 
FO validity (without recursive definitions/lfp) that we use, developed in the work~\cite{loding18}. The results in this section are derived from the work in~\cite{loding18} and are not contributions of this paper.

Let $\varphi$ be an FO formula. To check the validity of $\varphi$, we negate and Skolemize it --- introducing both Skolem constants and Skolem functions --- and obtain a purely universally quantified formula $\psi$ such that $\varphi$ is valid if and only if $\psi$ is unsatisfiable. 
Let $\psi$ be of the form $\forall \overline{x}.\,\eta(\overline{x})$ where 
$\eta(\overline{x})$ quantifier-free.
For a set of ground terms $T$, we denote by $\psi[T]$ the set of all quantifier-free formulas that are obtained by instantiating the variables $\overline{x}$ in $\psi$ by terms in $T$, i.e.,

\centerline{$
\psi[T] := \{\eta(\overline{t}) \mid \overline{t} \textit{~is a tuple of terms in T of arity ~} |\overline{x}| \}.
$}

It follows that if $\psi[T]$ is unsatisfiable then $\psi$ is unsatisfiable and therefore $\alpha$ is valid. Since we assume in our setting that satisfiability/validity of quantifier-free formulas is decidable (see Section~\ref{sec:uct}), checking whether $\psi[T]$ is unsatisfiable is decidable. 


\mypara{Systematic Quantifier Instantiation} The above suggests a complete semi-decision procedure for validity based on systematic quantifier instantiation (\SQI). Let $\psi \equiv \forall \overline{x}.~ 
\eta(\overline{x})$ be the formula that we want to check for unsatisfiability where $\overline{x}$ are variables of the foreground sort and $\eta$ is quantifier-free. For any $k \in \mathbb{N}$, let $T_k$ denote the set of all ground terms whose type is the foreground sort and are of \emph{depth} at most $k$ (we assume that the signature contains at least one constant symbol for the foreground sort).
Then, starting with $k=0$, we check whether $\psi[T_k]$ is unsatisfiable. If it is then we halt and report that $\varphi$ is valid; otherwise, we increment $k$ and repeat. This motivates the following definition:

\begin{definition}[Provability at depth $k$ using \SQI]
\label{def:depthk-provability}
A formula $\varphi$ is provable at depth $k$ using \SQI~
if the negated and Skolemized formula $\psi$ is such that $\psi[T_k]$ is unsatisfiable.\qed
\end{definition}
The above is a sound procedure, i.e., if $\varphi$ is provable at depth $k$ using SQI (for some $k$) then it is clearly valid. It is also a complete procedure for validity in pure first-order logic without any theories (i.e., just uninterpreted functions). 
This follows from Herbrand's theorem and the compactness theorem. It turns out that this continues to be a complete procedure in the multisorted setting for the kind of FOL formulas that we work with. i.e., those that \emph{quantify only over 
the foreground sort}. We formally state below this result from the work in~\cite{loding18}: 

\begin{theorem}[From~\cite{loding18}]
\label{thm:natproofs-complete-fo-bg-safe}
Let $\varphi$ be a formula with quantification only over the foreground sort. Then $\varphi$ is valid if and only if there exists $k \in \mathbb{N}$ such that $\varphi$ is provable at depth $k$ using \SQI. 
\end{theorem}



We implement and use \SQI~ for proving validity of first-order logic formulae in this work.




\section{The FOSSIL Algorithm for Sequential Lemma Synthesis}\label{sec:algorithm}

In this section, we present the fundamental contribution of this paper: \fossil~ (First-Order Solver with Synthesis of Inductive Lemmas), our algorithm for solving the Sequential Lemma Synthesis problem formulated in Definition~\ref{def:seqlemsynth}. Figure~\ref{fig:orchestra} shows the components of our framework 
which we describe 
in Section~\ref{sec:fossil-components}. \fossil~ is a counterexample-based lemma synthesis algorithm that orchestrates interactions between these external components through three kinds of counterexamples. 
We formally define these counterexamples in Section~\ref{sec:counterexamples}. We then present the \fossil~ algorithm 
in Section~\ref{sec:algorithm-independent-lemmas}. Finally, we illustrate a run of \fossil~ on our running example in Section~\ref{sec:illustrative-example}. 
(the algorithm is guaranteed to find a proof if there is a set of independent lemmas that prove the goal).

\begin{figure}[ht]
\centering
\includegraphics[width=0.9\textwidth]{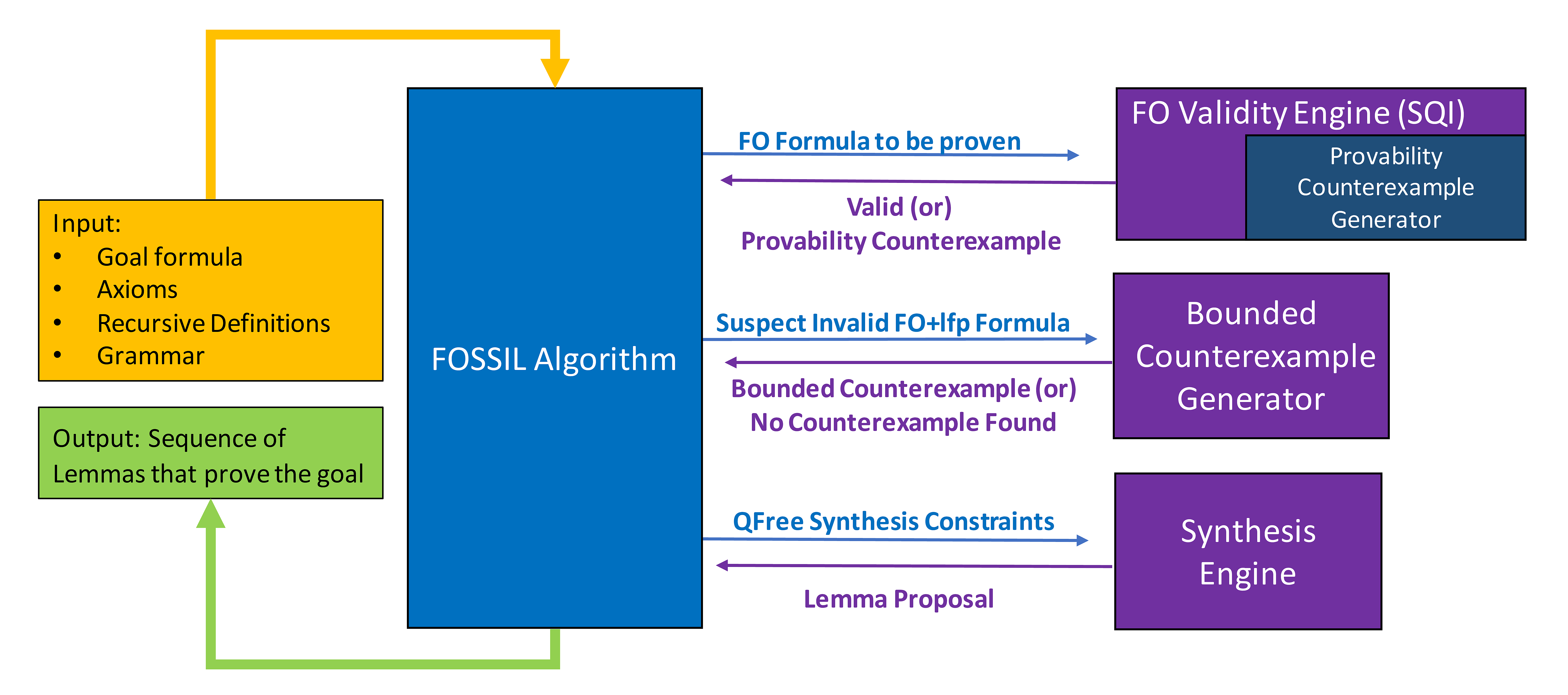}
\caption{Components of \fossil.}
\label{fig:orchestra}
\end{figure}

\subsection{Components of \fossil}
\label{sec:fossil-components}

In this section, we discuss the external components used by the core \fossil~ algorithm. We only describe \emph{what} these components are, deferring implementation details to Section~\ref{sec:evaluation}. 
Let us fix 
a set $\axioms$ of axioms and a set $\recdef$ of recursive definitions throughout the following presentation. We also fix a \emph{goal} formula $\alpha$ and a grammar $\mathcal{G}$ for lemmas. \highlight{We assume that $\axioms$ consists of universally quantified sentences, and that $\alpha$ is also a universally quantified sentence (using Skolemization if necessary).}

\fossil~ finds a proof of $\alpha$ by synthesizing a sequence of lemmas $\lemmas = (L_1, L_2, \ldots, L_n)$ belonging to $\mathit{Lang}(\mathcal{G})$ such that $\lemmas$ is a sequence of lemmas proving $\alpha$ according to Definition~\ref{def:thm-seq-proof}. 
The high-level external components of \fossil~  
are shown as purple boxes/arrows in Figure~\ref{fig:orchestra}. 
We describe their abstract interface 
below in terms of formulae and counterexamples. We encourage the reader to think of counterexamples as 
\emph{finite FO models} for now, pending their formalization 
in Section~\ref{sec:counterexamples}.
The components of \fossil~ are:
\begin{enumerate}[leftmargin=10pt]
    \item {\bf First-Order Validity Engine $\natproofsalgo(\varphi, k)$}:
    This is an FO validity checking algorithm based on Systematic Quantifier Instantiation (see Section~\ref{sec:natural-proofs}). It takes as input a formula $\varphi$ and a natural number $k$, and outputs whether $\varphi$ \emph{valid} or \emph{unprovable} at depth $k$ using \SQI.

    \item {\bf Provability Counterexample Generator} $\provabilitycounterex(\varphi,k)$: 
    This is a counterexample generation module that is part of the FO validity engine. When a formula is found to be unprovable (using term instantiation with terms of depth $k$) it returns a \emph{finite counterexample model}. This model is one in which $(\neg \varphi)[T_k]$ holds. $(\neg \varphi)[T_k]$ is the negation of the formula instantiated by terms up to depth $k$. Intuitively, the counterexample witnesses the \emph{non-provability} of $\varphi$ using term instantiation with depth $k$ terms. Note that finite counterexample models (i.e., where the foreground universe is finite) always exist because $(\neg \varphi)[T_k]$ is a quantifier-free formula. This module is used to generate the $\falsemodel$ and $\pfpmodel$ counterexamples (the inputs being the goal or a proposed lemma respectively). The types of counterexamples are explained below in Section~\ref{sec:counterexamples}.
    %

    \item {\bf Bounded Counterexample Generator} $\truecounterex(\varphi, \mathit{size})$:
    Given an FO+\textit{lfp} formula $\varphi$ and a parameter $\mathit{size}$ this module returns a finite model with at most $\mathit{size}$ elements in the foreground sort that shows that the formula is \emph{not valid}, if possible. It may also return that such a model could not be found (because one may not exist at that size). These models interpret recursively defined predicates using the true \textit{lfp} semantics and will be used as $\lfpmodel$ counterexamples. 

    \item {\bf Synthesis Engine} $\synthesismodule(C, \mathcal{G})$:
    This module synthesizes candidate lemmas. It takes as input 
    a set of counterexample models expressed as quantifier-free constraints $C$ and a grammar $\mathcal G$, and generates an expression in $\textit{Lang}(\mathcal G)$, if one exists, that avoids all counterexamples.

\end{enumerate}

\noindent
We now turn to the definition of the three kinds of counterexamples used in \fossil.

\subsection{Counterexamples}
\label{sec:counterexamples}
\fossil~ is a \emph{counterexample-guided} algorithm that uses the verification and synthesis components in rounds of lemma proposals. In this section, we define the notion of the various counterexamples that we use.

While counterexamples can be intuitively thought of as finite models for the foreground universe, we will formally treat them as conjunctive ground formulae as described in Section~\ref{sec:uct}. 
For example, consider the model depicting a one-element linked list on the pointer $n$. The foreground universe has two elements, say $v_1$ and $v_2$, such that $\mathit{n}(v_1) = v_2$ and $\textit{nil}$ is interpreted to be $v_2$. Then the ground formula $\gf \equiv v_1 \neq v_2 \land v_2 = \nil \land n(v_1) = v_2$ with new constant symbols $v_1$ and $v_2$ 
defines a class of models that contains the intended model. 
In general, a ground formula captures a \emph{class} of models where a finite portion of the model is constrained by the formula.

\highlight{In our algorithm we evaluate formulas over tuples of elements on models represented by a ground formula $\gf$. We use the notation $\gf(\overline{c})$ to indicate that the model contains interpretations for the constants in $\overline{c}$, and we use this tuple to instantiate the variables of formulas that we evaluate over the model. For example, see  lines~\ref{alg1:truemodel-constraints} and~\ref{alg1:pfp-constraints} of the \fossil~ algorithm (Figure~\ref{fig:alg1}). Similarly, we refer to a set of elements interpreted by a model by $C$ and use it to evaluate a formula on all tuples over $C$, 
as in line~\ref{alg1:falsemodel-constraints}.}

The \fossil~ uses three kinds of counterexamples. 
Let us fix $\mathit{ctx} \equiv \bigwedge (\axioms \cup \recdef^{\mathit{fp}} \cup \lemmas)$ to be the \emph{context} formula containing the axioms, recursive definition abstractions, and the valid lemmas $\lemmas$ discovered so far.

\mypara{$\bffalsemodel$ Counterexamples} $\falsemodel$ counterexamples guide the synthesis toward lemmas that help prove the goal.
Given a term depth $k$, $\falsemodel$ counterexamples witness   
    non-provability of the goal $\alpha$ 
    using term instantiation with depth $k$ terms. In other words, this is a counterexample to the non-provability of $\mathit{ctx} \Rightarrow \alpha$ using the instantiation. 
    
    \highlight{Formally, a $\falsemodel$ counterexample at depth $k$ is a satisfiable ground formula $\gf_1$ such that $\modelsfo \gf_1 \Rightarrow (\mathit{ctx} \land \neg \alpha[\overline{c}])[T_k]$, where $\overline{c}$ is a tuple of Skolem constants resulting from the Skolemization of the existential quantifiers in the negation of $\alpha$. Such a model witnesses that $\alpha$ cannot be proven from $\mathit{ctx}$ by instantiation with terms of depth $k$. We use $\falsemodel$ counterexamples in \fossil~ in lemma synthesis by accessing tuples of elements that correspond to terms in $T_k$ (line~\ref{alg1:falsemodel-constraints} in Figure~\ref{fig:alg1}). We name these elements and represent them as a set $C$, denoting the counterexample by $\gf_1(C)$.
    }
    
    \smallskip
    \noindent\emph{Generating $\falsemodel$ counterexamples:} Recall from Section~\ref{sec:uct} that in our setting, for any satisfiable quantifier-free formula $\varphi$, we can obtain a satisfying model as a conjunctive ground formula. When failing to prove $\mathit{ctx} \Rightarrow \alpha$ using depth $k$ term instantiation we obtain a satisfiable conjunctive ground formula from the satisfiability of $(\mathit{ctx} \land \neg \alpha[\overline{c}])[T_k]$. This is the $\falsemodel$ counterexample.
    
%
     
    
    
\mypara{$\bflfpmodel$ Counterexamples}    
These counterexamples correspond to finite models (i.e., those in which the foreground sort is finite) that falsify a candidate lemma $L$ in \folfp. Such a model $\Mm$ satisfies $\Mm \modelslfp \mathit{ctx} \land \neg L$.
When a lemma $L$ is proposed, creating a small model in which $L$ is false can easily show its invalidity.

Formally, a $\lfpmodel$ counterexample for a lemma of the form $\forall \overline{x} R(\overline{x}) \rightarrow \psi(\overline{x})$ is represented as a ground formula $\gf_2(\overline{c})$ with constants $\overline{c}$ of the foreground sort such that $|\overline{c}| = |\overline{x}|$. The formula can include constraints involving relations in $\recsym$. $\gf_2$ interprets the recursively defined predicates with \textit{lfp} semantics. We require that there exists an FO model $\Mm$ whose interpretation for predicates in $\recsym$ matches their recursive definitions $\recdef$ (we describe how we implement this requirement in Section~\ref{sec:counterexample-module}). Finally, we require $\modelsfo \gf_2(\overline{c}) \Rightarrow R(\overline{c}) \wedge \neg \psi(\overline{c})$. 

    
For example, consider the finite model consisting of two locations, say $e_1$ and $e_2$, where $e_1$ is the head of a one-element linked list and $e_2$ points to itself on the $n$ pointer. This model is captured by the formula $e_1 \neq \mathit{nil} \land e_2 \neq \mathit{nil} \land \mathit{next}(e_1) = \mathit{nil} \land \mathit{next}(e_2) = e_2 \land \lst(\mathit{nil}) \land \lst(e_1) \land \neg\lst(e_2)$. Note that the correct valuation of $\lst$ on this universe is given by the formula. 
    
\smallskip
\noindent\emph{Generating $\lfpmodel$ counterexamples:} 
We fix a bound $\mathit{size} \in \mathbb{N}$ and use an SMT solver to identify a model with at most $\mathit{size}$ elements in the foreground sort that falsifies the lemma, if one exists. We provide further details in Section~\ref{sec:counterexample-module} and Section~\ref{sec:evaluation}.
    
\mypara{$\bfpfpmodel$ Counterexamples} $\pfpmodel$ counterexamples guide the search towards lemmas that are inductively provable using their PFP. When the PFP of a  proposed lemma is found to be unprovable (using depth $k$ term instantiation), we obtain a counterexample that witnesses the non-inductiveness of $L$ (with respect to the lemmas discovered so far). Note that we do not actually know whether the lemma is valid/invalid or provable/unprovable as it may require discovering other lemmas or a bigger instantiation depth. This is similar to a $\falsemodel$ counterexample, where instead of the target theorem we generate counterexamples to the PFP of a candidate lemma. 
    
Formally, a $\pfpmodel$ counterexample for a lemma $\forall \overline{x} R(\overline{x}) \rightarrow \psi(\overline{x})$ is a ground formula $\gf_3(\overline{c})$ with $|\overline{c}| = |\overline{x}|$ such that $\modelsfo \gf_3(\overline{c}) \Rightarrow \left(\mathit{ctx} \land \neg \mathit{PFP}(L)[\overline{c}]\right)[T_k]$ holds and $\gf_3$ is satisfiable. The constants $\overline{c}$ are Skolem constants obtained from Skolemizing the existential formula 
$\neg \mathit{PFP}(L)$.
    

\smallskip
\noindent\emph{Generating $\pfpmodel$ counterexamples:} 
Similar to $\falsemodel$ counterexamples, the generation of $\pfpmodel$ counterexamples is done using the quantifier-free formula obtained from the proof failure of $\mathit{ctx} \Rightarrow \mathit{PFP}(L)$ using depth $k$ term instantiation.
    





\small
\begin{figure}[ht]
\begin{flushleft}
\textsc{{\bfseries \fossil} $(\axioms,\recdef,\mathcal{G},\alpha;k,h)$}

\textsc{Input:} axioms $\axioms$, recursive definitions $\recdef$, grammar $\mathcal{G}$, goal formula $\alpha$, natural proofs depth parameter $k$, lemma production height parameter $h$\\
\textsc{Output:} Sequence of valid lemmas $\lemmas \in L(\mathcal{G})$ (of height at most $h$) that prove $\alpha$ \folfp-valid using $\SQI(k)$\\
\textsc{Imports:} {\bfseries \natproofsalgo,  \provabilitycounterex, \truecounterex, \synthesismodule}

 \begin{enumerate}[leftmargin=15pt]
 \item Compute $\mathcal{G}_h \subseteq \mathcal{G}$ such that $\mathit{Lang}(\mathcal{G}_h)$ does not contain any formulas whose parse-tree in $\mathcal{G}$ has a height greater than $h$.
 \label{alg1:finite-grammar}
 \label{alg1:terms-overapproximation}
 \item $\lemmas := ()$, $\lfpmodel := \emptyset$, and $\pfpmodel := \emptyset$ for each $R \in \recdef$
 \item $\Phi_\alpha:= \left(\bigwedge\axioms \cup \recdef^{\mathit{fp}}\right) \Rightarrow \alpha$
 \label{alg1:PhiAlpha}
 \item \textsc{While} (~{\bfseries \natproofsalgo}$(\Phi_\alpha, k) \neq VALID$~)
 \label{alg1:outer-loop}
 \item \ind $\gf_1(C) =$ {\bfseries \provabilitycounterex}$(\Phi_\alpha,k)$
 \item \ind $\falsemodel = \gf_1(C)$
 \label{alg1:false-model}
 \item \ind \textsc{While} (True) \label{alg1:inner-loop}
     \item \ind\ind $ L =$ {\bfseries \synthesismodule}$(S, \mathcal{G}_h)$ such that $L(\overline{x}) = \forall \overline{x}. R(\overline{x}) \rightarrow \psi(\overline{x})$\\
     \ind\ind and constraints $S$ are:
     \label{alg1:lemma-proposal}
     \hspace{10em}
     \begin{enumerate}[leftmargin=1.3cm]
     \item  $\modelsfo \gf_1(C) \Rightarrow \neg (\bigwedge L[C]$), where $gf_1(C)$ is the current $\falsemodel$ model\\ 
            \label{alg1:falsemodel-constraints}
     \item  $\modelsfo \gf_2(\overline{c}) \Rightarrow L(\overline{c})$ for all $(\gf_2(\overline{c}), R) \in \lfpmodel$ \\ 
            \label{alg1:truemodel-constraints}
     \item $\modelsfo \gf_3(\overline{c}) \Rightarrow PFP(L)(\overline{c})$ for all $(\gf_3(\overline{c}),  R)\!\in\! \pfpmodel$ \\
            \label{alg1:pfp-constraints}
     \end{enumerate}
     \item \ind\ind If no lemma found, call \fossil$(\axioms,\recdef,\mathcal{G},\alpha;k\!+\!1,h\!+\!1)$
     \label{alg1:increase-depth}
     \item \ind\ind $\Phi_L := \left(\bigwedge\axioms \cup \recdef^{\mathit{fp}} \cup \lemmas\right) \Rightarrow PFP(L)$
     \label{alg1:PhiL}
     \item \ind\ind \textsc{If} ({\bfseries \natproofsalgo}$(\Phi_L, k) = VALID$)
     \label{alg1:PhiL-validity}
     \textsc{Then} // Valid Lemma
     \label{alg1:valid-lemma}
     \item \ind\ind\ind $\lemmas := \lemmas \circ (L)$ (sequence extension)
     \label{alg1:valid-lemma-addition}
     \item \ind\ind\ind $\Phi_\alpha := \left(\bigwedge\axioms \cup \recdef^{\mathit{fp}} \cup \lemmas\right) \Rightarrow \alpha$
     \item \ind\ind\ind $\pfpmodel := \emptyset$
     \label{alg1:pfpmodel-reset}
     \item \ind\ind\ind \textsc{Continue Loop on Line~\ref{alg1:outer-loop}}
     \label{alg1:continue-outer-loop}
     \item \ind\ind \textsc{Else} // Unprovable Lemma
     \label{alg1:lemma-validity} 
     \item \ind\ind\ind $\gf_2(\overline{c}) =$ {\bfseries \truecounterex}$(L, 
     \mathit{size})$ 
     \label{alg1:truemodel-generation}
     \item \ind\ind\ind \textsc{If} ($\gf_2(\overline{c})$ found) // Invalid Lemma
     \label{alg1:invalid-lemma}
     \item \ind\ind\ind\ind $\lfpmodel := \lfpmodel \cup \{(\gf_2(\overline{c}),R)\}$
     \label{alg1:truemodel-addition}
     \item \ind\ind\ind \textsc{Else} // Irrefutable and Unprovable Lemma
     \label{alg1:unprovable-lemma}
     \item \ind\ind\ind\ind $\gf_3(\overline{c}) =$ {\bfseries \provabilitycounterex}$(\Phi_L, k)$
     \label{alg1:get-pfp-countermodel}
     \item \ind\ind\ind\ind $\pfpmodel := \pfpmodel \cup \{(\gf_3(\overline{c}),R)\}$
     \label{alg1:pfpmodel-addition}
     \item \ind\ind\ind \textsc{Continue loop on Line~\ref{alg1:inner-loop}}
     \label{alg1:continue-inner-loop}
 \end{enumerate}
\end{flushleft}
 
 \caption{The \fossil~ algorithm.} \label{fig:alg1}
\end{figure}
\normalsize


\subsection{The \fossil\ Algorithm}
\label{sec:algorithm-independent-lemmas}

We now present the main contribution of this paper, the \fossil~ algorithm, which synthesizes lemmas in order to prove a theorem in \folfp.

Figure~\ref{fig:alg1} shows the pseudocode of \fossil~ using the external components \SQI, \provabilitycounterex, \truecounterex, and \synthesismodule~ described in Section~\ref{sec:fossil-components}. The input is a set of axioms $\axioms$, a set of recursively defined predicates $\recdef$, a grammar $\mathcal{G}$ whose language potentially contains the lemmas of interest, and the goal $\alpha$. 
The algorithm is parameterized over a depth $k$ for term instantiation and a bound $h$ on the height of the expressions to synthesize from $\mathcal{G}$.

The algorithm has an \emph{outer loop} for proving the goal correct on line~\ref{alg1:outer-loop} and an \emph{inner loop} for discovering valid lemmas on line~\ref{alg1:inner-loop}. At a general point in the execution on line~\ref{alg1:outer-loop}, we try to prove the formula $\Phi_\alpha$
(which says that the valid lemmas found imply the goal) using \SQI~ with terms of depth $k$. If it is valid, we halt and return the sequence of lemmas found.

If $\Phi_\alpha$ is unprovable, we obtain a $\falsemodel$ counterexample with a foreground universe $C$ on line~\ref{alg1:false-model} and enter the inner loop to discover valid lemmas that will help the proof.

At a general point in the inner loop execution on line~\ref{alg1:inner-loop}, we have a $\falsemodel$ counterexample, along with a set of $\lfpmodel$ counterexamples and a set of $\pfpmodel$ counterexamples. We call the \synthesismodule~ module to find a lemma in $\mathcal{G}$ of the form $L(\overline{x}) = \forall \overline{x}. R(\overline{x}) \rightarrow \psi(\overline{x})$ and height bounded by $h$ such that: (\ref{alg1:falsemodel-constraints}) the lemma is false on the $\falsemodel$ model, i.e., false on some tuple of elements from $C$ \highlight{(in line~(\ref{alg1:falsemodel-constraints}), $L[C]$ denotes the set of all instantiations of $L$ by elements from $C$, and $\bigwedge L[C]$ their conjunction)}; (\ref{alg1:truemodel-constraints}) the lemma holds on every $\lfpmodel$ counterexample at the tuple $\overline{c}$ witnessing the invalidity of a previously proposed lemma for $R$ (i.e., with $R$ appearing in the antecedent); and (\ref{alg1:pfp-constraints}) the $\mathit{PFP}$ of the lemma holds on every $\pfpmodel$ counterexample at the tuple $\overline{c}$ witnessing the non-inductiveness of a previously proposed lemma for $R$.


If no such lemma is found, we halt and restart the \fossil~ algorithm with higher values for $k$ and $h$. If a lemma $L$ is found, we try to prove $\Phi_L$ valid on line~\ref{alg1:PhiL-validity} using terms of depth $k$, which says that $\mathit{PFP}(L)$ holds (i.e., $L$ is inductive) given the other valid lemmas discovered. If it is valid, then we add $L$ to our assumptions and the current sequence of lemmas, discard $\pfpmodel$ counterexamples, stop the inner loop, and finally retry the proof of the theorem on line~\ref{alg1:outer-loop}. We discard $\pfpmodel$ counterexamples since previously non-provable lemmas may now be provable.


If $\Phi_L$ is unprovable, we try to obtain a $\mathit{size}$-bounded $\lfpmodel$ counterexample $\gf_2(\overline{c})$ on line~\ref{alg1:truemodel-generation} such that $L$ does not hold on the tuple $\overline{c}$ of the foreground universe. 
If we cannot obtain a $\lfpmodel$ counterexample, then we obtain a $\pfpmodel$ counterexample $\gf_3(\overline{c})$ such that $\mathit{PFP}(L)$ does not hold at $\overline{c}$ in the model $\gf_3$. We add these counterexamples to their respective sets and continue searching for valid lemmas on line~\ref{alg1:inner-loop}.

\subsection{Running Example: List Segments}
\label{sec:illustrative-example}
\highlight{
In this section, we present a full execution of our algorithm on the running example introduced in Example~\ref{ex:running}. 
Let us recall the Verification Condition (VC) $\eggoal$ introduced earlier:

\centerline{$\lseg(x,y_1) \Rightarrow \bigg(\ite(y_1 = \nil, y_2 = y_1, y_2 = n(y_1)) \Rightarrow lseg(x,y_2) \bigg)$}



We illustrate a run of our algorithm that proves $\eggoal$ First, it turns out that $\eggoal$ is not FO-valid and therefore not provable using \SQI. It is also not provable by induction using the formula itself as the induction hypothesis, i.e., $\egrecdef^{\mathit{fp}} \modelsfo PFP(\eggoal)$ does not hold. 

\mypara{$\bffalsemodel$ Counterexample} We feed our goal $\eggoal$ to the \natproofsalgo~ module with $k=1$ from which we obtain a $\falsemodel$ counterexample $\Mm_1$ (line~\ref{alg1:false-model} in Figure~\ref{fig:alg1}):
\begin{equation*}
\begin{gathered}
u_1 \mapsto u_2 \mapsto u_3 \mapsto u_3\\
u_4 \mapsto u_3,\ u_5 \mapsto u_5
\end{gathered}
\qquad and \qquad
\begin{gathered}
x = u_1,\ y_1 = u_4,\ y_2 = u_3,\ \nil = u_5\\
\lseg(u_1,u_3) = \mathit{false},\text{ and } \lseg \text{ is } \mathit{true} \text{ otherwise}
\end{gathered}
\end{equation*}
\noindent
where we use $u \mapsto v$ to represent $n(u) = v$ and $u_i$ are elements of the model returned by the solver (one can think of them as new constants). 
We make some observations here about the interpretation of $\lseg$ in $\Mm_1$. The interpretation is not consistent with \textit{lfp} semantics as $\lseg(u_1,u_4) = \mathit{true}$ but $u_1$ never reaches $u_4$ following the $n$ pointer. In fact, the interpretation is not even consistent with the fixpoint semantics $\egrecdef^\mathit{fp}$ as the definition does not hold for $\lseg(u_1,u_3)$. 
This is because \SQI~ at $k=1$ only enforces the fixpoint interpretation for $\lseg$ if the two locations are one step away. Therefore, $\Mm_1$ merely witnesses the non-provability of $\eggoal$ using \SQI~ with $k=1$\footnote{\highlight{The reader may wonder whether using \SQI~ at $k=2$ proves $\eggoal$. However, this is also not true as one can construct a model similar to $\Mm_1$ where $u_3$ is three steps away from $u_1$ instead of two. In fact, there exists such a counterexample for any $k$.}}. 


\mypara{$\bflfpmodel$ Counterexample} 
We now search for a lemma using the $\synthesismodule$ module 
(line~\ref{alg1:lemma-proposal}), 
which could propose the lemma $L_1 \equiv \forall x,y.\; \lseg(x,\nil) \Rightarrow \lseg(y, x)$. $L_1$ is not true on $\Mm_1$ and eliminates it as expected, but it is not valid (and is hence found not provable on line~\ref{alg1:PhiL-validity}). We now give it to the \truecounterex~ module (line~\ref{alg1:truemodel-generation}) which returns the $\lfpmodel$ counterexample $\Mm_2$:
\begin{equation*}
\begin{gathered}
v_1 \mapsto v_2 \mapsto v_2
\end{gathered}
\qquad and \qquad
\begin{gathered}
x = v_1,\ y = v_2,\ \nil = v_2\\
\lseg(v_2,v_1) = \mathit{false},\text{ and } \lseg \text{ is } \mathit{true} \text{ otherwise}
\end{gathered}
\end{equation*}

$\Mm_2$ is a model of a one-element linked list where the interpretation of $\lseg$ is consistent with the \textit{lfp} semantics. 
We add $\Mm_2$ to the set of $\lfpmodel$ models (line~\ref{alg1:truemodel-addition}) ensuring that future lemmas at least hold true on this simple model and continue our search.


\mypara{$\bfpfpmodel$ Counterexample} At some point in the search we obtain the lemma
$L_2 \equiv \forall x,y.\; \lseg(x, y)\newline \Rightarrow (\lseg(y, \nil) \Leftrightarrow \lseg(x,\nil))$. 
$L_2$ is valid but, as it turns out, $PFP(L_2)$ is not FO-valid (under $\egrecdef^{\mathit{fp}}$) and therefore $L_2$ is not provable. The failure of the check on line~\ref{alg1:PhiL-validity} leads to the generation of a $\pfpmodel$ counterexample\footnote{\highlight{Observe here that a $\pfpmodel$ counterexample can always be generated for an unprovable lemma, regardless of whether the lemma is truly invalid or not.}} (line~\ref{alg1:get-pfp-countermodel}) $\Mm_3$ which is similar in spirit to $\Mm_1$ as it witnesses the non-provability of $PFP(L_2)$ by \SQI. We do not present the model here in the interest of brevity. We add $\Mm_3$ to our set of countermodels (line~\ref{alg1:pfpmodel-addition}) to ensure that $L_2$ is not re-proposed (until we get another valid proposal) and continue lemma search.


\mypara{Denouement} After many such rounds of lemma proposal and counterexample generation, the synthesizer 
proposes the lemma $\eglemma \equiv \forall x,y_1,y_2.\;\lseg(x, y_1) \Rightarrow (\lseg(y_1,y_2) \Rightarrow \lseg(x,y_2))$ 
introduced in our running example (Example~\ref{ex:running-pfp} in Section~\ref{sec:induction-proofs}). We know from Examples~\ref{ex:running-pfp} and~\ref{ex:running-thm-and-lemma} that $\eglemma$ is inductive and proves $\eggoal$, and in fact it is provable with \SQI~ at $k=1$. Therefore, the checks on line~\ref{alg1:PhiL-validity} and subsequently on 
line~\ref{alg1:outer-loop} both succeed, whereupon \fossil~ terminates and reports that $\eggoal$ is valid along with the lemma $\eglemma$ used to prove it.
}

\section{Synthesis and Counterexample Generation Engines}
\label{sec:component-implementations}
In this section, we provide details of the individual modules from Figure~\ref{fig:orchestra}. We refer the reader to Section~\ref{sec:natural-proofs} for the \natproofsalgo~ module and only describe the synthesis and counterexample generation modules below.

\subsection{Synthesis Engine}
\label{sec:synthesis-module}

The module $\synthesismodule$ takes a finite grammar for expressing lemmas along with a set of \emph{ground} constraints $\psi(\mathit{exp})$ over an expression variable $\mathit{exp}$. A finite grammar is one that generates a finite language. It produces a formula $\varphi$ in the grammar such that $\psi$ is valid when $\mathit{exp}$ is replaced with $\varphi$. 

This problem formulation is similar to SyGuS~\cite{alur15, alur18} in that we have a grammar and constraints on the synthesized expression. However, SyGuS specifications are of the form $\forall \overline{x}.\, \psi(\mathit{exp},\overline{x})$ and can therefore be more complex. In contrast, our constraints have no variables or quantification and are \emph{grounded}. We can of course use SyGuS solvers as synthesis engines, and indeed we do so in a version of our implementation of \fossil~ (see Section~\ref{sec:implementation}).

We now describe our custom synthesis engine tailored for ground constraints. \highlight{First, since our lemmas are all purely universally quantified over the foreground sort we make the quantifiers implicit and only synthesize quantifier-free expressions}. Second, we reduce the synthesis to a quantifier-free query over a combination of theories that can be effectively handled by modern SMT solvers~\cite{nelson80, demoura08}. 
Since derivations from the grammar are of finite height, it is easy to see that we can encode any expression in the language using a finite set of boolean variables representing choices of production rules for each nonterminal in a derivation. Encodings like these are typical in constraint-based synthesis. Combined with the fact that the constraints are grounded, synthesis reduces to a quantifier-free SMT query that asks for an assignment to the boolean variables representing a candidate lemma that satisfies the constraints.

\highlight{
\mypara{Grounded Constraints and Using Boolean Constraint Solvers} One important optimization that we did in the synthesis engine is to solve it using (essentially) Boolean constraints. Counterexamples in our setting
are finite models that can be captured using \emph{grounded formulas} as
described in the previous section. Given a grammar, we first bound the depth of the grammar (this bound is incremented in an outer loop) and we model the choices of which production rules are applied using a set of Boolean variables $\overline{b}$. Consequently, each valuation of $\overline{b}$ stands for a formula $\psi[\overline{b}]$. For conforming to a  counterexample $ce$, we need to write a formula $\textit{Eval}_{ce}(\overline{b})$ that checks whether the formula
$\psi[\overline{b}]$, the formula encoded by $\overline{b}$, holds on the model $ce$ for a particular instantiation of the free variables in $\psi$. (The actual lemma universally quantifies over variables and asserts $\psi$.) 

The straightforward encoding of this problem will essentially evaluate the parse tree of the formula, examining the appropriate Boolean variables in $\overline{b}$ to interpret subformulas or subterms at each node of the parse tree, introducing variables of appropriate sort for subterms. This introduction of variables causes the problem to be an SMT query. However, if we restrict to grammars where all nonterminals generate only formulas (no terms), then it turns out that we can encode the problem without additional variables.

Grammars can be made to have nonterminals generate only formulas by enumerating terms in the derivation rules of atomic formulas. Furthermore, evaluation of atomic formulas over models can be effected using just ground formulae, for a particular instantiation of the free variables over a model, which can be modeled using Skolem constants. 

This leads to formulae over $\overline{b}$ that are all grounded constraints, which is essentially Boolean satisfiability. We implement the above optimization and find it extremely effective on our benchmarks. 
}

We implement the above technique in a custom synthesis engine (see Section~\ref{sec:implementation}) and evaluate its efficacy in Section~\ref{sec:evaluation}.


\subsection{Counterexample Generators}
\label{sec:counterexample-module}
\fossil~ uses three kinds of finite counterexample models to guide lemma synthesis. The $\falsemodel$ model witnesses non-provability of the goal given the current set of synthesized lemmas and makes the synthesis goal-directed. $\lfpmodel$ models witness the invalidity of lemmas proposed and guide synthesis towards producing valid lemmas. Finally, the $\pfpmodel$ models witness non-inductiveness of lemmas proposed and guide synthesis towards producing provable lemmas.

Among these, the $\falsemodel$ and $\pfpmodel$ counterexamples are generated using the $\provabilitycounterex$ module as shown on lines~\ref{alg1:false-model} and~\ref{alg1:get-pfp-countermodel} in Figure~\ref{fig:alg1}. These are obtained as a by-product of using the $\natproofsalgo$ module for verification since it reduces the validity of a quantified formula $\varphi$ to the satisfiability of a \emph{quantifier-free formula} $\psi$ (see Section~\ref{sec:natural-proofs}).

The generation of $\lfpmodel$ models is more involved. We realize the $\truecounterex$ module which generates them using an SMT solver. Given a bound $\mathit{size}$ on the size of the model, we construct a formula that represents the existence of $\mathit{size}$-many elements $u_1,u_2\ldots ,u_{\mathit{size}}$ such that the valuation of functions (including recursively defined predicates) satisfies the axioms and falsifies the given lemma. \highlight{The key aspect of our construction is the notion of the \emph{rank} of $(R,\overline{u})$ for every $R \in \recsym$ and argument $\overline{u}$ in the domain of $R$. The rank of $(R,\overline{u})$ is an integer in the range $[-1,\infty)$ which we constrain to ensure that the valuation of recursively defined predicates on a $\lfpmodel$ model is consistent with their definitions interpreted using \textit{lfp} semantics.

Let us consider the simple case where we only have one recursively defined predicate $R$ which is unary and has the definition $R(x) :=_\mathit{lfp} \rho(x,R)$. Since there is only one recursively defined predicate, we drop $R$ from the notation for simplicity and simply refer to the rank of $u$ instead of the rank of $(R,u)$. Assume that the definition $\rho(x,R)$ refers to $R$ over a particular set of terms---say $R(t_1(x)), R(t_2(x)), \ldots R(t_m(x))$. The rank of $u$ is an integer variable $\mathit{Rank}_u$ whose value is in the range in the range $[-1,\infty)$. We then enforce the following constraints: (a) $R$ holds on $u$ iff the rank of $u$ is not $-1$, (b) if the base case of the definition holds then the rank is $0$, i.e., iff $\rho(u,\bot)$ holds then the rank of $u$ is $0$, (c) if the rank of $u$ is positive, then the witnessing atomic formulae $R(t_i(u))$ that make $\rho(u,R)$ true are such that each $t_i$ gets a smaller non-negative rank than the rank of $u$,
and (d) if the rank of $u$ is $-1$, then in any set of witnessing atomic formulae $R(t_i(u))$ we pick such that their truth would make $\rho(u,R)$ true, there is at least one $t_i$ whose rank is $-1$.

Intuitively, the rank of $(R,\overline{u})$ mimics the iteration order of the usual iterative least fixpoint computation of R at which the tuple $\overline{u}$ is ``added'' to R.} It is easy to see that if we assign ranks this way, 
i.e., assigning the rank of $u$ to be the iteration number at which it is added to $R$ (and $-1$ if it is never added), then the ranks will satisfy the above constraints. Furthermore, if an assignment of ranks satisfying the constraints exists, then we are assured that $R$ evaluates to the true least fixpoint. Finally, since we only want a bounded model the above constraints can be expressed as a quantifier-free SMT query. We use this technique to produce true counterexamples to lemmas.

\highlight{
\mypara{Computing Least-Fixpoints versus Using Under-Approximations} The reader may wonder whether it is possible to use under-approximations of the least-fixpoint instead of computing the precise \textit{lfp} valuations for $\lfpmodel$ counterexamples. After all, if a predicate $R$ holds in an under-approximation, then it certainly holds in the least-fixpoint semantics. However, under-approximations will not work because of the presence of negation in two ways. First, our lemmas and theorems can mention recursively defined functions/predicates in negated form. In this case, computing an under-approximation of the lfp will not be correct. For example, consider a lemma $\forall x. R(x) \Rightarrow S(x)$ for recursively defined predicates $R$ and $S$. Negating this lemma would require a model of $R(x) \wedge \neg S(x)$. An under-approximate computation of $S$ will not work in this case as we may obtain models that do not satisfy this negated formula. Second, negations are also needed in recursive definitions. Our general theoretical treatment allows negations in layers. Such definitions do occur in our experiments. For example, the definition of a binary tree (see Example~\ref{ex:tree-lfpdef}) recursively requires the root not to be present in the heaplets of subtrees rooted at the left and right children of the root. This involves negation of the heaplet function $\mathit{htree}$ which is recursively defined. 

}

\section{Soundness and Relative Completeness}
\label{sec:fosil-ind-completeness}
The soundness of \fossil~ is clear from the problem description and the termination conditions in Figure~\ref{fig:alg1}: the branch on line~\ref{alg1:valid-lemma} is only taken when a lemma is proved valid, and the loop condition on line~\ref{alg1:outer-loop} establishes that if \fossil~ terminates, it does so with a sequence of lemmas that prove $\alpha$. We can now ask whether the algorithm will always find a sequence of lemmas in $\mathcal{G}$ that prove $\alpha$ if one exists. It turns out that  \fossil~ is not complete for the problem of sequential lemma synthesis. However, \fossil~ is complete with respect to \emph{independent lemmas} (see Definition~\ref{def:thm-ind-proof}). That is, if there is a set of independent lemmas that prove $\alpha$, then it is guaranteed that \fossil\ will find a sequential proof of $\alpha$. 

\begin{theorem}[Relative completeness of \fossil~ with respect to independent lemmas]
\label{thm:complete-independent-lemma}
If $\alpha$ is provable from $\axioms$ and $\recdef$ by a finite set of independent inductive lemmas in $\mathcal{G}$ in the sense of Definition~\ref{def:thm-ind-proof}, then there is an instantiation depth $k$ and a grammar height $h$ such that \fossil~
terminates and returns a sequence $\lemmas$ of lemmas that proves $\alpha$.
\end{theorem}
\begin{proof}[Proof Gist]
Assume that there exists some set of independent lemmas $\{L_1, L_2, \ldots, L_n\}$ that proves $\alpha$. 
We establish that 
at least one $L_i, 1 \leq i \leq n$ will be eventually (at some finite time) chosen by the synthesis module, i.e., it cannot be that the algorithm restarts {\fossil} with new parameters in line~\ref{alg1:increase-depth} or runs forever without choosing one of the lemmas $L_i$. 

It is clear from the definition of $\mathcal{G}_h$ that $\mathit{Lang}(\mathcal{G}_h)$ is finite for any $h$. Observe from the description of the algorithm in Section~\ref{sec:algorithm-independent-lemmas} that in each round the candidate proposal $L$ will either: (i) be prevented from being proposed again in the inner loop (line~\ref{alg1:inner-loop}) by the addition of a $\pfpmodel$ model, or (ii) be prevented from being proposed again permanently during the execution of \fossil~ (with parameters $k$ and $h$) because it was proved valid and added to $\Phi_\alpha$ or it was proved invalid using a $\lfpmodel$ model. 
Therefore we can eliminate the possibility that the algorithm will run forever without choosing a lemma from $\lemmas$.

This leaves us with the possibility that the algorithm reaches line~\ref{alg1:increase-depth} without finding a new candidate lemma. In particular, this means that none of the $L_i$ satisfies the constraints in line 8. 
It is easy to see that each $L_i, 1 \leq i \leq n$ satisfies constraints~\ref{alg1:truemodel-constraints} and~\ref{alg1:pfp-constraints} since the former constraint is satisfied by any lemma valid in the FO+\textit{lfp} theory defined by $\axioms$ and $\recdef$, and the latter is satisfied by any lemma that is provable by induction. 
This leaves us with constraint~\ref{alg1:falsemodel-constraints}. Assume for the sake of contradiction that no lemma satisfies the constraint, i.e., there is a model $M$ (namely the current $\falsemodel$ model) such that $M \models (\axioms \cup \recdef \cup \{\neg \alpha\} \cup \{L_i\})[T_k]$ for any $L_i, 1 \leq i \leq n$. This yields that $M \models (\axioms \cup \recdef \cup \{\neg \alpha\} \cup \{L_i | 1 \leq i \leq n\})[T_k]$, which contradicts our initial assumption that $\{L_1, \ldots, L_n\}$ collectively prove $\alpha$ at depth $k$, i.e., $(\axioms \cup \recdef \cup \{\neg \alpha\} \cup \{L_i | 1 \leq i \leq n\})[T_k]$ is unsatisfiable. Therefore some $L_i$ satisfies the constraint on line~\ref{alg1:falsemodel-constraints} and will eventually be proposed. Finally, we use induction on the number of lemmas $n$ to reduce the given problem to a smaller one. See Appendix~\ref{app:algorithm} for a detailed proof.
\end{proof}

There are several possibilities for extending \fossil~ to achieve completeness for sequential lemma synthesis. One particular extension is an algorithm called \fossilseq. The key idea is that when a lemma is neither provable nor refutable we add the \emph{induction principle} $PFP(L) \Rightarrow L$ as an assumption (instead of adding $\pfpmodel$ counterexamples). 
The induction principle can be added because it is always valid. 
We discuss the possible extensions, describe $\fossilseq$, and prove its relative completeness for sequential lemma synthesis in Appendix~\ref{sec:algorithm-dependent-lemmas}. We do not pursue these extensions in our work any further as they are significantly more expensive than $\fossil$.

\pldicommentout{
\subsection{Lemma Synthesis Algorithms Relatively Complete wrt Sequential Lemmas}
\label{sec:algorithm-dependent-lemmas}

In this section, we briefly discuss the problem of designing algorithms for sequential lemma synthesis that are also relatively complete wrt sequential lemmas (instead of just 
being relatively complete wrt independent lemmas as in Theorem~\ref{thm:complete-independent-lemma}). To do this we must first see why \fossil~ is not already complete for sequential lemmas. They key obstacle is the $\falsemodel$ model that makes the lemma synthesis goal-directed. Consider the following scenario:

\begin{example}[\fossil~ is not complete for sequence of lemmas]
\label{ex:seqlemmas-deadlock}
Consider the case where $\alpha$ can be proved using a sequence $(L_1,L_2)$ of two lemmas. Let  $L_1$ be provable on its own, $L_2$ be provable assuming $L_1$, and $\alpha$ is provable assuming $L_2$. At the beginning of the algorithm on line~\ref{alg1:false-model} in Figure~\ref{fig:alg1}, $L_2$ would be false on $\falsemodel$ since it helps prove $\alpha$. But there is nothing that prevents $L_1[T_k]$ from being true on $\falsemodel$, so let us suppose that it is true. If that is the case, then $L_2$ might be selected by the algorithm and then quickly dismissed since it cannot be proved valid without $L_1$. We would then add a counterexample for it on line~\ref{alg1:pfpmodel-addition} witnessing that $L_2$ has no inductive proof. However, the $\falsemodel$ model has not changed (we only recompute it when we find a valid lemma) and therefore $L_1$ will never be proposed as well. We cannot guarantee that a proof of $\alpha$ will be found by \fossil.
\end{example}


\noindent We propose three different strategies to address the above issue:
\begin{enumerate}
    \item The simplest way to achieve the relative completeness is to utilize \fossil~ as described in Figure~\ref{fig:alg1}, but eliminate constraints corresponding to $\falsemodel$ models. This eliminates the problem described in Example~\ref{ex:seqlemmas-deadlock} where we need to necessarily synthesize lemmas that help prove the goal, and instead reduces the algorithm to only generating lemma proposals and eliminating spurious proposals using $\lfpmodel$ and $\pfpmodel$ models. This approach has the obvious disadvantage of not being goal-directed and could lead to large execution time for proof if the sequence of lemmas needed consists of large lemmas (by size) and smaller lemmas could be easily eliminated if given the goal.
    
    \item A second approach is to have the algorithm branch into two-subroutines (both branches searched fairly, dovetailing between them) when given a lemma that is unprovable, one assuming that the lemma is valid and other assuming that it is not. We can then pursue each subroutine until we find a proof or reach a contradiction. However, this algorithm could quickly explode in the number of subroutines even with a few unprovable lemmas and likely impractical.
    
    \item We propose a third alternative that  generalises \fossil. Looking at the Example~\ref{ex:seqlemmas-deadlock}, it would be useful if we could update $\falsemodel$ to include the failure to prove $L_2$ so that the lemma synthesis is guided towards $L_1$. What should be the constraint with which we update the model? The updated model could be such that $L_2$ holds (on the instantiated terms), or it could witness that $L_2$ is not inductive, i.e., cannot be proved by induction. However, these two possibilities are precisely those expressed by the induction principle for $L_2$. Recall the definition from Section~\ref{sec:induction-proofs}: the induction principle of a lemma $L(\overline{x})$ is given by $(\forall \overline{x}. PFP(L(\overline{x}))) \rightarrow (\forall \overline{x}. L(\overline{x})) \equiv \neg(\forall \overline{x}. PFP(L(\overline{x}))) \lor (\forall \overline{x}. L(\overline{x}))$ where $PFP$ represents the condition that $L$ is inductive. Therefore the induction principle captures the two possibilities of $L$ either being valid or not inductive. Our third alternative proposal is thus to use the induction principle to address the problem of completeness for sequences of lemmas in an algorithm we call \fossilseq.
\end{enumerate}

\paragraph{\fossilseq} Let us discuss the third strategy in more detail. Simply put, we would like to add the induction principle for any lemmas that we cannot prove to our axioms and retain the rest of the algorithm. In particular, with respect to the algorithm description in Figure~\ref{fig:alg1} we would maintain a set $\ip$ of induction principles starting out with an empty set and include it in the construction of $\Phi_\alpha$ and $\Phi_L$ on lines~\ref{alg1:PhiAlpha} and~\ref{alg1:PhiL}. Then, given a proposal $L$ that we can neither prove nor establish as being invalid using a $\lfpmodel$ model (line~\ref{alg1:unprovable-lemma}), we would eliminate falling back to a $\pfpmodel$ model on lines~\ref{alg1:get-pfp-countermodel} and~\ref{alg1:pfpmodel-addition} and replace it with the update of $\ip$ with the induction principle of $L$. This algorithm, which we call \fossilseq~, is relatively complete for the problem of sequential lemma synthesis:
\!\!\!\!\!\!
\begin{theorem}[Relative completeness of \fossilseq~ with respect to sequential lemmas]
\label{thm:complete-induction-principle}
If $\alpha$ is provable from $\axioms$ and $\recdef$ by a finite sequence of inductive lemmas, then there is an instantiation depth $k$ and grammar height $h$ such that \fossilseq~ (see Figure~\ref{fig:alg2} in Appendix~\ref{sec:fossilseq}) terminates and returns a set $\lemmas$ of lemmas and a set $\ip$ of induction principles proving $\alpha$.
\end{theorem}
We detail the formulation of proving a theorem using induction principles, the pseudocode for the \fossilseq~ algorithm and the proof of its relative completeness in Appendix~\ref{sec:fossilseq}. Admittedly, despite the elegance of being able to handle unprovable lemmas uniformly and being a natural extension to \fossil~ using induction principles, our solution could still create large synthesis queries that could be difficult to handle and therefore has potential disadvantages as do the other two strategies.

}

\section{Implementation and Evaluation}
\label{sec:evaluation}

In this section, we describe our implementation and evaluation of \fossil~ (see Section~\ref{sec:algorithm}). 
We also compare with lemma synthesis tools over ADTs and Separation Logic.

\subsection{Implementation}
\label{sec:implementation}
We implement \fossil~ in Python, building the components given in Figure~\ref{fig:orchestra} using Z3Py (an API for the SMT solver Z3~\cite{demoura08}) to handle the various SMT queries for verification and generation of counterexamples. Our implementation covers the various external modules as well as the main \fossil~ algorithm.

The first component is an implementation of the $\natproofsalgo$ module (see Section~\ref{sec:natural-proofs}). As far as we know, this is the first implementation of systematic quantifier instantiation~\cite{loding18, pek14, qiu13} that realizes a complete FO validity engine for quantified formulae using SMT. 
The second component is an extension of the SQI engine that provides provability counterexamples (used for $\falsemodel$ and $\pfpmodel$ models). 
The third component is the bounded counterexample generator which we implement using the technique described in Section~\ref{sec:counterexample-module}. 


The fourth component is an implementation of a custom synthesis engine (a SyGuS solver) that uses constraint solvers (SMT) to synthesize expressions from a grammar given ground constraints. We implement this based on the technique described in Section~\ref{sec:synthesis-module}. As we show in our experiments, reductions to off-the-shelf synthesis engines did not work well. Our synthesis engine exploits the fact that constraints are grounded and carefully generates constraints so that synthesis can be done using SMT solvers. 
These optimizations were crucial to ensuring efficiency of the synthesis engine. The synthesis engine explores the space of terms and the space of formulae independently, prioritizing exploring the space of formulae. It only explores terms of depth $0$ or $1$ as we found this sufficient to solve all our benchmarks. 

Finally, we implement the core \fossil~ algorithm (Figure~\ref{fig:alg1}), utilizing the components above. 


\subsection{Research Questions}
Our evaluation aims to answer the following Research Questions (RQs).\smallskip\\
\noindent {\bf  RQ1}: How effective is FOSSIL in synthesizing inductive lemmas to prove theorems?

\noindent {\bf RQ2}: How effective are countermodels in FOSSIL? 

\noindent {\bf RQ3}: How effective is our constraint-based synthesis approach in FOSSIL? 



\subsection{Benchmarks}
We curate two classes of benchmarks. The first suite
consists of 50 theorems that were 
distilled from the work on VCDryad~\cite{pek14} repository\footnote{The repository can be found at \url{https://madhu.cs.illinois.edu/vcdryad/examples/}.} which verifies heap manipulating programs. VCDryad converts \textsc{Dryad}, a variant of separation logic, to \folfp. From about 450 VCs (Verification Conditions), we eliminated those that were provable using pure FO reasoning, those that were provable by induction (using the theorem itself as the induction hypothesis), or those that could be proved using frame reasoning~\cite{reynolds02}. The goal was to retain only those VCs that required lemma synthesis. 
From these, we distilled a set of theorems (removing trivial reformulations) and added them to our suite. We also formulate several theorems that capture 
static properties of data structures. 
Six more theorems were obtained by modeling partial correctness of scalar programs with loops. We capture
the computation of the program as a linked list of configurations and use \textit{lfp} to determine \emph{reachable} states, demanding that unsafe states are not reached. Table~\ref{tbl:minisy-experiments} shows the list of theorems that we include in our suite. For example, `bst-leftmost' requires proving that the leftmost node in a binary search tree has the smallest key in the entire tree. We also include theorems about linked lists, sorted linked lists, list segments, dags, binary search trees, maxheaps, etc. The benchmarks obtained from scalar programs are labeled by the prefix `reachability'.

The second suite of benchmarks consists of 673 synthetic theorems that are automatically generated using fixed recipes. The data structure is a dag/tree with a $\mathit{key}$ field and data fields $d_1,\ldots,d_n$, all of type integer. The theorem requires proving that a predicate $P$ holds on the $d_1$ field of the root of the tree. The predicates chosen were inspired by induction exercises for undergraduate students in discrete math courses. The inductive lemma requires stating the properties of several data fields. 
The data structures also satisfy other properties based on structure as well as the $\mathit{key}$ field (dag, tree, binary search tree, max heap, and trees with parent pointers). The suite was obtained from combinations of predicates, the number of data fields, and properties of data structures.




\mypara{Lemma Grammars} \highlight{Since our lemmas are of the form $L \equiv \forall \overline{x}. R(\overline{x}) \rightarrow \psi(\overline{x})$ we make the universal quantifiers implicit, and the grammars only restrict the quantifier-free formula $\psi$.} 
For the first suite, we systematically generate grammars based only on the syntax of the recursive definitions and the theorem. 
All variables $\overline{x}$ and all foreground constants from the theorem are added to the grammar. All constants mentioned in the definitions and the theorem are also added.
We allow all terms over these variables and constants. 
For atomic formulae, we add all relations (including recursively defined relations) over the foreground sort. 
If integers appear in the theorem, we add equality and inequality for integer terms.
If sets appear in the theorem, we add membership and other set operators.
The only Boolean connective allowed is implication. We stratify the grammars by the complexity of formulae (primarily split according to the inclusion or exclusion of set operations) to allow for efficient exploration. 

For the second benchmark suite, we design the grammar automatically. We add the variables and constants of the foreground sort from the theorem. We add $0$ and the integer terms built from $\mathit{key}$ and the other data fields. The atomic formulae included are the data structure relation, equalities and disequalities between foreground sort terms, and the fixed predicate $P$ from the benchmark. Finally, we allow implication and conjunction as Boolean operators.

\begin{table*}[!htb]
\centering
\caption{Experiment results of the FOSSIL tool. The Syn column is the number of lemmas synthesized; the Val column is the number of valid lemmas synthesized; the Time column is the runtime in seconds.}
\label{tbl:minisy-experiments}
\begin{minipage}{.49\linewidth}
    \centering
    \small
	\begin{tabular}{|lccr|} 
		\hline
		\multicolumn{1}{|l}{Theorem} & \multicolumn{1}{c}{Syn} & \multicolumn{1}{c}{Val} & \multicolumn{1}{c|}{Time (s)} \\
		\hline
				dlist-list			& 1	& 1	& 1 \\
		slist-list			& 2	& 1	& 1 \\
		sdlist-dlist			& 2	& 1	& 2 \\
		sdlist-dlist-slist		& 4	& 2	& 3 \\
		listlen-list			& 1	& 1	& 0 \\
		even-list			& 3	& 1	& 1 \\
		odd-list			& 5	& 2	& 3 \\
		list-even-or-odd		& 11	& 4	& 124 \\
		lseg-list			& 7	& 1	& 5 \\
		lseg-next			& 6	& 1	& 6 \\
		lseg-next-dyn			& 1	& 1	& 1 \\
		lseg-trans			& 5	& 1	& 5 \\
		lseg-trans2			& 7	& 1	& 7 \\
		lseg-ext			& 12	& 1	& 12 \\
		lseg-nil-list			& 6	& 1	& 4 \\
		slseg-nil-slist			& 5	& 1	& 4 \\
		list-hlist-list			& 6	& 1	& 2 \\
		list-hlist-lseg			& 4	& 1	& 2 \\
		list-lseg-keys			& 7	& 1	& 4 \\
		list-lseg-keys2			& 7	& 1	& 4 \\
		rlist-list			& 2	& 1	& 2 \\
		rlist-black-height		& 21	& 7	& 125 \\
		rlist-red-height		& 20	& 7	& 124 \\
		cyclic-next			& 20	& 2	& 126 \\
		tree-dag			& 3	& 1	& 3 \\
		\hline
	\end{tabular}
\end{minipage}
\begin{minipage}{.49\linewidth}
    \centering
	\small
	\begin{tabular}{|lccr|} 
		\hline
		\multicolumn{1}{|l}{Theorem} & \multicolumn{1}{c}{Syn} & \multicolumn{1}{c}{Val} & \multicolumn{1}{c|}{Time (s)} \\
		\hline
		bst-tree			& 2	& 1	& 4 \\
		maxheap-dag			& 2	& 1	& 3 \\
		maxheap-tree			& 2	& 1	& 3 \\
		tree-p-tree			& 2	& 1	& 3 \\
		tree-p-reach			& 14	& 2	& 17 \\
		tree-p-reach-tree		& 12	& 3	& 18 \\
		tree-reach			& 9	& 2	& 25 \\
		tree-reach2			& 4	& 1	& 7 \\
		dag-reach			& 5	& 1	& 20 \\
		dag-reach2			& 6	& 1	& 4 \\
		reach-left-right		& 12	& 3	& 40 \\
		bst-left			& 10	& 1	& 57 \\
		bst-right			& 8	& 1	& 104 \\
		bst-leftmost			& 39	& 10	& 167 \\
		bst-left-right			& 27	& 6	& 104 \\
		bst-maximal			& 5	& 1	& 5 \\
		bst-minimal			& 7	& 1	& 7 \\
		maxheap-htree-key		& 29	& 3	& 155 \\
		maxheap-keys			& 9	& 2	& 140 \\
		reachability			& 4	& 1	& 4 \\
		reachability2			& 2	& 1	& 2 \\
		reachability3			& 3	& 1	& 3 \\
		reachability4			& 2	& 1	& 2 \\
		reachability5			& 4	& 1	& 4 \\
		reachability6			& 4	& 1	& 3 \\
		\hline
	\end{tabular}
\end{minipage}
\end{table*}
\normalsize
\subsection{RQ1: Effectiveness of FOSSIL in Proving Theorems}


We study the effectiveness of our tool in solving both benchmark suites.

\subsubsection*{Benchmark Suite \#1}
Table ~\ref{tbl:minisy-experiments} gives the names of the 50 theorems in Suite \#1, along with the total time taken by our tool to prove each theorem. We find that our tool solves all benchmarks within 5 minutes per benchmark, splitting time between the grammar strata. 
\highlight{Guided by early empirical results, we put in an optimization of the general description of $\fossil$ in our tools by incrementing $h$ but not $k$ when we exhaust the given grammar (line~\ref{alg1:increase-depth} in Figure~\ref{fig:alg1}).} 
The table also reports the total number of lemmas synthesized and the number of lemmas among those that were proved valid.

\fossil~ is effective on these benchmarks. The average time per theorem was \SI{30}{s} (with a maximum of \SI{167}{s}). The total number of lemmas proposed varied from 1 (i.e., the first proposed lemma was sufficient) to $39$, with up to $10$ valid lemmas discovered when solving some benchmarks. 
\pldicommentout{We refer the reader to Appendix~\ref{sec:apendix-all-lemmas} for a list of all lemmas proven valid during the process of 
proving each theorem.}
Most benchmarks were solved with formula depth $h=3$ and term instantiation depth $k = 1$. For 14 benchmarks, 
the tool reached $h = 4$ and $k = 1$.

\pldicommentout{
For example, consider again the following short program from~\ref{sec:illustrative-example}
\begin{verbatim}
    if (x == nil) { ret := nil }  else { ret := n(x) }
\end{verbatim}
this time with precondition $\dlist(x)$ (stating that $x$ points to a doubly linked list) and postcondition $\lst(x)$. When converted to a VC, this is not provable in first-order logic. Further, the PFP of this theorem is also not provable. However, \fossil~ quickly synthesize the lemma $\forall x.\; \dlist(x) \Rightarrow \lst(x)$ (in fact it is the first lemma proposed). The PFP of this lemma is provable using the natural proof solver of \fossil, and with this lemma assumed, the VC of the above triple is provably valid in FOL
(this is the \textsf{dlist-list} example in Table~\ref{tbl:minisy-experiments}).
}

The tool finds interesting lemmas such as those characterizing properties of data structures, and relating different structures (like lists and list segments), relating different constraints on data structures, etc. We refer the reader to Appendix~\ref{sec:apendix-all-lemmas} for valid lemmas discovered in proving each theorem.
For example, for \texttt{bst-left-right}, the tool proposes 27 lemmas of which 6 were proved valid, including complex lemmas such as\\
\centerline{$\bst(x) \Rightarrow (y \in \hbst(x) \Rightarrow \minr(x) \leq \minr(y))$}\\
Here, $\bst(x)$ means $x$ is the root of a binary search tree, and $\minr(x)$ denotes the minimum key in the subtree rooted at $x$; both are recursively defined. The lemma states that for every node $y$ in a bst, the minimum key in the subtree of $y$ is less than or equal to the minimum key of the whole tree.
While intuitively true for any bst, formal proof of this property requires induction.

\pldicommentout{
\texttt{lseg-ext} defines the following theorem:
\begin{align*}
    \lseg(x, y) \implies &((\mathit{key}(x) \neq k \land \lseg(x, z)) \\
    &\implies \lseg(y, z) \lor \lseg(z, y))
\end{align*}
}

\pldicommentout{In general, our tool does not perform well on failing runs, and it is unusable if the given theorem does not happen to be valid. 
In this case, one can alter the grammar or extend the timeout if he or she is convinced the theorem is valid, or one can consider using a model checking tool to immediately check for the invalidity of the theorem.
This functionality is not yet built into \fossil, though it may be added in the future.}

\begin{figure}
    \centering
    \includegraphics[width=3.8in]{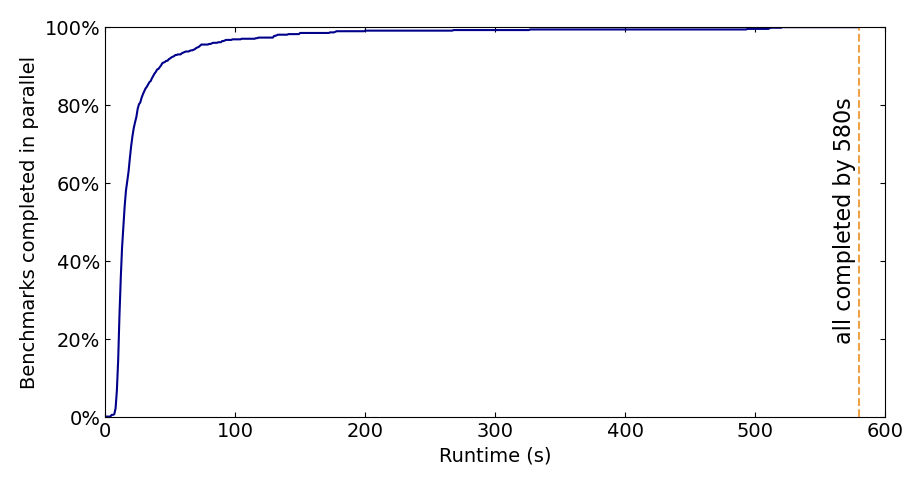}
    \caption{Cumulative sum graph of FOSSIL on the synthetic benchmark suite of 673 theorems.}
    \label{fig:synthetic}
\end{figure}

\subsubsection*{Benchmark Suite \#2}
Figure~\ref{fig:synthetic} contains a cumulative sum graph depicting the time taken by our tool on the synthetic benchmarks.
Our tool performs well, proving all 673 theorems within the timeout of 10 minutes.
$628$ of the benchmarks, approximately $93\%$, were solved within one minute.


\subsection{RQ2: Comparison to Synthesis without Use of  Counterexamples}

\begin{figure*}[t!]
\captionsetup[subfigure]{font=small,labelfont=small}
    \centering
    \begin{subfigure}[t]{0.48\textwidth}
        \centering
        \includegraphics[width=2.5in]{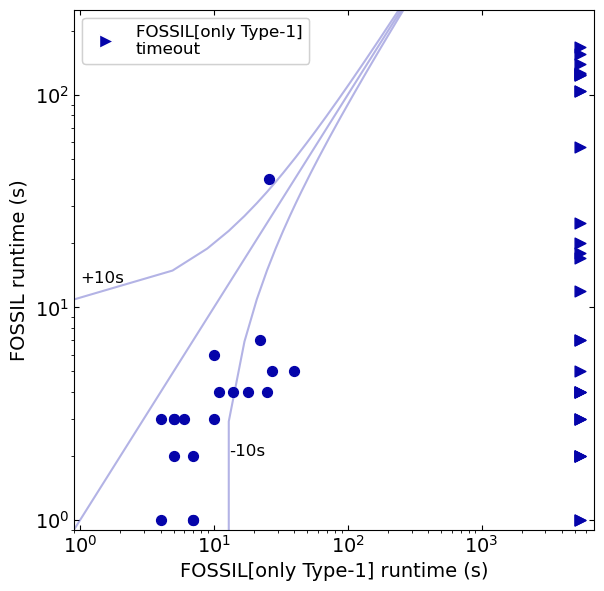}
        \caption{Runtime comparison of FOSSIL vs. FOSSIL with no $\pfpmodel$ or $\lfpmodel$ counterexamples.}
        \label{fig:streaming-comparison}
    \end{subfigure}
    \hfill
    \begin{subfigure}[t]{0.48\textwidth}
        \centering
        \includegraphics[width=2.5in]{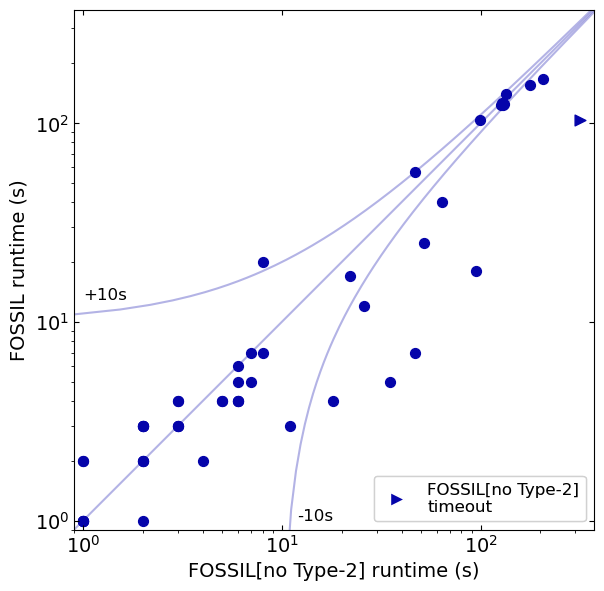}
        \caption{Runtime comparison of FOSSIL vs. FOSSIL with no $\lfpmodel$ counterexamples.}
        \label{fig:no-lfp-comparison}
    \end{subfigure}
    \vskip\baselineskip
    \begin{subfigure}[t]{\textwidth}
        \centering
        \includegraphics[width=2.5in]{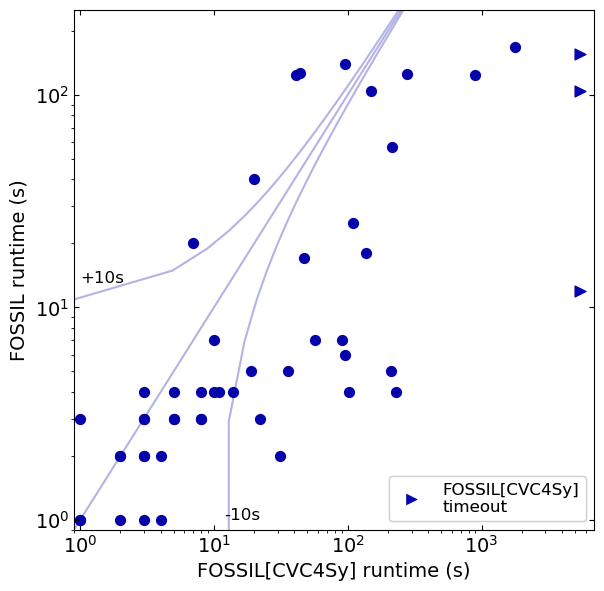}
        \caption{Runtime comparison of FOSSIL vs. FOSSIL using CVC4Sy.}
        \label{fig:cvc4sy-comparison}
    \end{subfigure}
    \caption{Ablation studies of the FOSSIL tool. The timeout is 1 hour. In~\ref{fig:streaming-comparison},~\ref{fig:no-lfp-comparison}, and~\ref{fig:cvc4sy-comparison}, the diagonal lines represent equal running time for both axes. Points on the super-diagonal curves signify FOSSIL is 10 seconds slower than its ablated counterpart, while points on the sub-diagonal curves signify FOSSIL is 10 seconds faster.}
    \label{fig:ablation-main}
\end{figure*}

\begin{figure}
    \centering
    \includegraphics[width=4.25in]{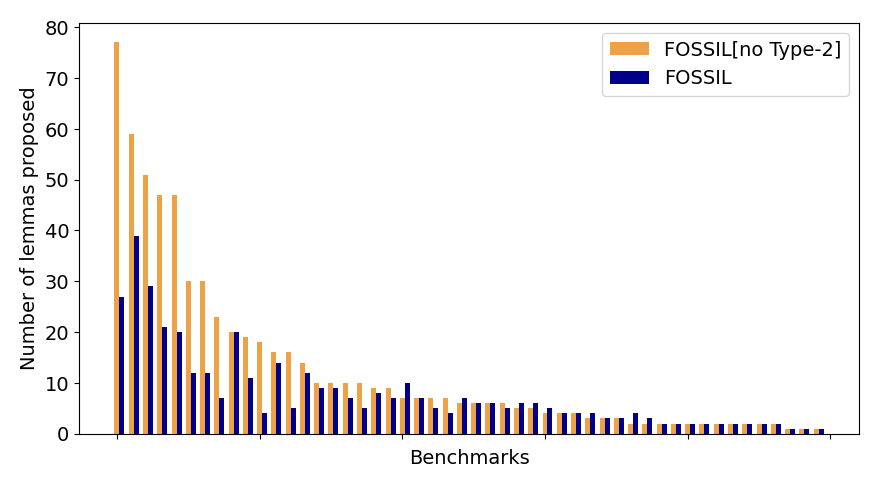}
    \caption{Comparison of lemma proposal counts by FOSSIL vs. FOSSIL without $\lfpmodel$ counterexamples.}
    \label{fig:no-lfp-comparison-lemmas}
\end{figure}

We test the efficacy of counterexamples by removing each kind during synthesis. We do not ablate $\falsemodel$ counterexamples since proposed lemmas would be unrelated to the theorem and a comparison is not meaningful. 
We perform ablation studies removing 
both $\lfpmodel$ and $\pfpmodel$ counterexamples or only removing  $\lfpmodel$ counterexamples.

\subsubsection*{Efficacy of $\lfpmodel$ and $\pfpmodel$ counterexamples:}
\highlight{It is not possible to directly run our  synthesis engine without $\lfpmodel$ and $\pfpmodel$ counterexamples as the same invalid lemmas can be continuously re-proposed. We hence modify our algorithm to perform the ablation study. The algorithm differs from \fossil~ (Figure~\ref{fig:alg1}) in two ways. First, the $\synthesismodule$ module can \emph{skip} solutions, proceeding to others. Second, when a lemma is not provable (line~\ref{alg1:lemma-validity} in Figure~\ref{fig:alg1}) we simply discard the lemma by asking the synthesis engine to skip to the next solution. We do this until a valid lemma is found, at which point we move to the outer loop (line~\ref{alg1:outer-loop}) and attempt to prove the goal again. Of course, in this algorithm, we also do not maintain sets of $\lfpmodel$ or $\pfpmodel$ counterexamples and only use the $\falsemodel$ counterexample in the synthesis query.
}

\highlight{In our implementation, we integrate a version of \fossil~ with the 
state-of-the-art SyGuS solver in CVC4 (CVC4Sy), providing only $\falsemodel$ counterexamples during synthesis. We used the efficient \emph{streaming} mode of CVC4Sy that can skip solutions. This mode generates a stream of solutions to a synthesis query without repetition, and we simply skip along this stream when we reject candidate lemmas. CVC4Sy is well-optimized, performing symmetry and semantic reductions~\cite{cvc4sy}. We used a timeout of 1 hour for the ablated algorithm.

Figure~\ref{fig:streaming-comparison} compares the ablated tool against our tool (with all types of counterexamples) on Suite \#1 benchmarks.
Apart from a few outliers where the lemmas proposed are very simple, \fossil~ with only $\falsemodel$ counterexamples performs drastically worse than \fossil~ with all three counterexamples. $31$ of the $50$ benchmarks did not terminate with the ablated tool before the timeout. This shows the efficacy of $\lfpmodel$ and $\pfpmodel$ counterexamples in guiding search.
}

We also perform this experiment with the synthetic benchmarks (Suite \#2). \fossil~ using only $\falsemodel$ counterexamples surprisingly solves only \emph{1 out of the 673 benchmarks} within 10 minutes. This
again demonstrates the efficacy of  
$\lfpmodel$ and $\pfpmodel$ counterexamples.

\subsubsection*{Efficacy of $\lfpmodel$ counterexamples} We evaluate the efficacy of $\lfpmodel$ countermodels in \fossil~ by building a version of \fossil~ that does not use $\lfpmodel$ counterexamples. 

\highlight{The ablated algorithm is similar to the one in Figure~\ref{fig:alg1} except for the case where a lemma is not provable (line~\ref{alg1:lemma-validity}). If a lemma cannot be proven valid, we do not try to generate a $\lfpmodel$ counterexample (lines~\ref{alg1:truemodel-generation}-~\ref{alg1:truemodel-addition}) and skip directly to generating a $\pfpmodel$ counterexample (line~\ref{alg1:get-pfp-countermodel}). A $\pfpmodel$ counterexample can always be generated since it witnesses the non-provability of a lemma (see Section~\ref{sec:counterexamples}). It also ensures that such unprovable lemmas will not be re-proposed.}

Figure~\ref{fig:no-lfp-comparison} shows the running time comparison between the \fossil~ tool and the \fossil~ tool without $\lfpmodel$ counterexamples.
The ablated tool does not solve one of the benchmarks and is slower in general for many benchmarks, especially those that require more than 10 seconds to solve. $\lfpmodel$ countermodels seem to have a higher impact in pruning the search space for more complex theorems. Figure~\ref{fig:no-lfp-comparison-lemmas} shows a comparison in the number of proposed lemmas for \fossil~ vs. \fossil~ without $\lfpmodel$ counterexample models. Fewer lemmas are proposed for most benchmarks in the \fossil~ tool, showing the efficacy of the guidance of $\lfpmodel$ counterexamples. 

\subsection{RQ3: Comparison with CVC4 SyGuS Solver}
To evaluate the efficacy of our custom synthesis tool that learns from first-order models with grounded constraint solving, we compare our synthesis tool with CVC4Sy (in standard mode with \emph{all} counterexamples), utilizing the synthesis engines in an identical fashion to the \fossil~ tool.
\highlight{We use a timeout of 1 hour for the ablated algorithm.} 
Figure~\ref{fig:cvc4sy-comparison} shows the results of this evaluation and indicates that as theorems become more complex, \fossil~ with our custom constraint-based synthesis solver solidly outperforms \fossil~ with CVC4Sy as the synthesis solver.
Thus, exploiting the form of synthesis in this domain that has ground constraints is useful. 

\subsection{Comparison with ADT/Separation Logic Tools}\label{sec:comparison-adts-sl}
The idea of discovering inductive hypotheses to prove theorems is a problem that has been studied in many logical contexts. We are not aware of any tools that synthesize inductive lemmas for FO+\textit{lfp}, especially ones that can handle foreground and background sorts as in our setting. 

Comparing tools that work for different logics (FO+\textit{lfp}, algebraic datatypes, separation logic) is inherently hard and poses several challenges: the logics being different, the hardness of translating theorems between them, translation bloat, translations that make theorems harder and required lemmas more complex, tools supporting only restricted fragments, and so on. These make fair comparisons hard. 

In this section however, we attempt to compare our tool with tools for algebraic datatypes (ADTs) and separation logic on our benchmarks, making the best translation effort. Though our tool performs much better than these tools on our benchmarks, this should not be construed as evidence that the other tools are inferior in their native settings. Yet, as the comparison below will show, 
solving theorems in \folfp~ effectively by reducing them to tools for other logics does not seem possible. We also believe that incorporating our ideas into lemma synthesis tools for
these logics natively is an interesting future direction.




\subsubsection*{Comparison with Tools for Algebraic Datatypes}
Theoretically, the logic FO+\textit{lfp} and FO logic over algebraic datatypes are very different. 
In pure ADT logics, the universe is a 
\textit{single} universe while FO+\textit{lfp} admits a multitude of universes. 
Furthermore, our benchmarks are motivated by reasoning over pointer-based heaps that embed data structures, which are different from pure mathematical algebraic datatypes (heaps admit a spaghetti of pointers that embed overlapping data structures). 
Consequently, we find it impossible to encode our benchmarks in a pure ADT logic.


However, when a first-order logic over ADTs includes \emph{uninterpreted functions} (or higher-order functions), we can find reasonable encodings. We can model locations using elements of some ADT (say $0$ with $\textit{succ}$) or even a background theory of integers if supported. We can model pointers using uninterpreted functions from locations to locations. 
Least fixpoint definitions can be modeled in several ways. We choose one that does not involve specific background sorts (such as true natural numbers) and instead uses the structure of ADTs.
%

We encode finite pointer-linked data structures such as linked lists and linked trees using ADTs such as lists and trees, respectively, that store \emph{locations} constituting the linked data structure. Now, recursive definitions on ADTs can capture whether a list/tree of locations corresponds to a linked list/tree by checking, recursively, that the relevant pointers ($\nxt$, or $\lft/\rght$) relate the locations stored in the ADT correctly. Using a mild generalization of this technique, we can encode recursively defined data structures of all kinds used in our benchmarks (including list segments, cyclic lists, doubly linked lists, binary search trees, etc.) in a fairly natural way.

We encoded all 50 benchmarks from Suite~\#1 into CVC4+ig~\cite{reynolds15} and ADTInd~\cite{Weikun19}, both of which use induction and lemma synthesis. 
CVC4+ig solved 1/50 benchmarks, and ADTInd solved 8/50 benchmarks within 15 minutes. This demonstrates that our tool performs significantly better on our benchmarks than reductions to these tools do. 

\subsubsection*{Comparison with Separation Logic Tools}
We consider tools in the Separation Logic Competition (SL-COMP)~\cite{slcomp19} (note that these tools do not have grammars for lemmas). 

There are many restrictions imposed by the various divisions and tools that make encoding our benchmarks challenging. None of the tools for the closest division \textbf{qf\_shid\_entl} support \emph{conjunction of heap formulas} that we require to encode our benchmarks. Also, some of our benchmarks mention heaplets explicitly and thus are hard to encode.


\pldicommentout{Second, the CVC4 tool (which participated in the ``Boolean separation logic'' divisions) supports conjunctions of heap formulas, but supports only satisfiability (not entailment) and also does not support recursive definitions.}

We consider the solver SLS (Songbird+Lemma Synthesis)~\cite{trung17} that  won the 2019 SL-COMP competition for the \textbf{qf\_shid\_entl} division.
SLS has support for synthesizing inductive lemmas. 
As mentioned above, many of our examples cannot be translated faithfully into SLS. 
We were able to encode and prove valid 14 of the 50 examples from Suite \#1.
There were several examples that we could encode but which SLS was unable to prove (at least 8 such: \texttt{cyclic-next}, \texttt{list-even-or-odd}, and the 6 program reachability examples).



\section{Related Work}
\label{sec:related-work}
\mypara{Quantifier Instantiation} Quantifier instantiation is a common tool for reasoning using SMT solvers~\cite{Reynolds16}. E-matching is an instantiation technique used in the \textsf{Simplify}
theorem prover~\cite{detlefs05}, which chooses instantiations based on matching pattern terms. Similar methods are implemented in other SMT solvers~\cite{barrett11,demoura08,rummer12}, as well as methods for combining term instantiation with background SMT solvers~\cite{ge09}. \highlight{The work in~\cite{feldman2017boundedqi} considers bounded quantifier instantiation in a pure FO setting (EPR) without background theories.}

\mypara{Natural Proofs} Our work directly builds off work related to natural proofs~\cite{madhusudan12,qiu13,pek14,suter10,loding18}. VCs similar to the theorems in our experiments are present in~\cite{qiu13}, though lemmas needed to be \textit{user-provided}. Work in~\cite{loding18} provided foundations for the work on natural proofs that preceded it. Completeness results in~\cite{loding18} directly contribute to our completeness results in this paper, and the techniques outlined in~\cite{loding18} are directly implemented in our tool.

\mypara{Reasoning with Recursive Definitions} There is vast literature on reasoning with recursive definitions. The NQTHM prover developed by Boyer and Moore~\cite{boyer88} and its successor ACL2~\cite{kaufmann97,kaufmann00} had support for recursive functions and had several induction heuristics to find inductive proofs. Recent works on cyclic proofs~\cite{brotherston11,ta16} also use heuristics for reasoning about recursive definitions. Additionally, an ongoing area of research involves decidable logics for recursive data structures~\cite{le17}. Naturally, the expressive power of these logics is restricted in order to obtain a decidable validity problem. Further techniques in Dafny~\cite{leino12} and Verifast~\cite{jacobs11} allow for verification via unfolding (or folding) recursive definitions, potentially based on user suggestions.
These are instances of ``unfold-and-match''~\cite{madhusudan12,nguyen08,pek14,qiu13,suter10}, a common heuristic for reasoning with recursive definitions that involves unfolding a recursive definition a few times and finding a proof of validity with the unfolded formulas, treating recursive definitions as uninterpreted. 

\mypara{Lemma Synthesis} The work in~\cite{chu15} uses proof-theoretic techniques to discover subgoals during proofs that serve as inductive hypotheses to help the proof. This relies on chancing upon inductive lemmas during proof, and the paper does not provide any relative completeness results. In contrast, our technique is syntax-guided for arbitrary lemmas and is relatively complete. Other lemma synthesis approaches include that of the work in~\cite{zhang21}, which also uses SyGuS for lemma generation, but operates over the simpler domain of bitvector problems. SLS (Songbird+Lemma Synthesis)~\cite{ta16} is a tool for lemma synthesis over Separation Logic. SLS identifies candidate lemma templates by looking at the heap structure of a given entailment. It then conducts structural induction proofs to generate constraints on top of a lemma template, then solves the constraints to refine the template and discover inductive lemmas. SLS only supports a constrained version of SL, disallowing, for example, conjunctions of heap formulas. As a result, many FO+\textit{lfp} formulas are inexpressible. Refer to Section~\ref{sec:comparison-adts-sl} for a detailed comparison between FOSSIL and SLS on our set of benchmarks.

\highlight{\mypara{Formula Synthesis} The problem of synthesizing or learning first-order formulas from first-order models has seen recent development.
In this space, our synthesis technique is specialized; we only learn universally quantified prenex formulas and hence use a mechanism that synthesizes the quantifier-free matrix using constraints that implicitly assume universal quantification.
The work in~\cite{learning-finite-variables} tackles the harder problem of synthesizing formulas with unboundedly many quantifiers but over finitely many variables (possibly reusing variables) and proves decidability results using tree automata. However, they do not present any practically effective algorithms, and naive implementations of tree automata techniques suffer from state-space explosion.
The work in~\cite{fo-sep} develops a synthesis technique that reduces to SAT, but does not handle grammars; therefore, whether the technique can be used in our work is unclear.
Further, whether either of these two works can be extended to effectively synthesize formulas involving background theories like integers and sets, which we require in $\fossil$, is also unclear.}

\adcomment{Is CEGAR relevant here? Look at commented out text here in the document and add if relevant.}

\highlight{\mypara{ICE Learning} Our counterexamples and framework bear resemblance to the work on ICE Learning~\cite{garg2014ICE} for invariant synthesis, as invariants in imperative program verification are similar to inductive hypotheses. $\fossil$ cannot handle invariant synthesis problems, however, despite the fact that inductive invariants are similar to inductive lemmas when programs are written as \folfp formulae. This is because we do not synthesize lemmas that quantify over background sorts such as integers (this distinction also applies to other methods catering to loop invariant synthesis~\cite{neider2018invsynthincompleteengines}, and as such we do not compare with such works). Our synthesis algorithm that uses essentially Boolean constraint solving exploits the fact that expressions synthesized do not have constants over the background sort. Second, while negative counterexamples in ICE correspond roughly to $\falsemodel$ models (though the latter are \emph{non-provability} counterexamples, similar to~\cite{neider2018invsynthincompleteengines}), the positive and implication counterexamples in ICE do not seem to have a strict counterpart in our framework. 
However, $\pfpmodel$ counterexamples come close to implication counterexamples. In the ICE setting programs change configurations, leading to implication counterexamples. In contrast, in the pure \folfp theorem proving setting, there are no changes to models that call for having two separate models as in an implication counterexample.
However, $\pfpmodel$ models are a single positive counterexample over which the \emph{PFP} of a lemma must hold. The PFP formula is itself an implication where the lemma to be synthesized ``appears'' on both sides, which makes them similar in spirit to implication counterexamples in ICE. 
}

\mypara{ADTs and Term Algebras} Turning to related work in proving properties of \emph{term algebras} and \emph{algebraic datatypes} (ADTs), the work in~\cite{kovacs17} focuses on automating logics over arbitrary term algebras using FO approximations. 
For lemma synthesis, the work in \cite{Weikun19} is another effort to synthesize inductive lemmas and also uses SyGuS (but without counterexample-guidance). The work in~\cite{reynolds15} also aims to synthesize inductive lemmas, and we provide a detailed comparison with this work in Section~\ref{sec:comparison-adts-sl}. The work in~\cite{racer} infers lemmas for synthesizing invariants but does not use counterexamples.

We emphasize again, however, that work on ADTs/term algebras and our work here on FO+\textit{lfp} are very different and hard to compare both theoretically and experimentally. 
First, a term algebra universe (ADTs) (without background universe) is a \emph{single} universe/model (with fixed interpretation of functions such as constructors/destructors) that is negation-complete.
Our universes model heaps and admit a \emph{multitude} of universes. 
Second, the universe of a term algebra has a complete recursive axiomatization~\cite{malcev62,hodges97}, and hence FO properties of ADTs are in fact decidable, while FO+\textit{lfp} does not even admit complete procedures, let alone decidable ones. 
Third, several data structures we work with do not even have analogous structures in the ADT world--- e.g., list segments between two locations, doubly-linked lists, cyclic lists. 
And destructive pointer updates on them are not expressible in the world of ADTs.
Also defining data structures common in ADTs in the heap
world are considerably more difficult, as we need to express separation (for example, 
even the definition of a tree requires such separation constraints; see Section~\ref{sec:prelim}).
Consequently, a fair experimental comparison of our tools against those developed for ADTs~\cite{sonnex12,cruanes17,hajdu20,boyer88,kaufmann97,passmore20,claessen13,johansson19,racer} is difficult.
Still, when the logic admits uninterpreted or higher-order functions, an encoding is possible (Section~\ref{sec:comparison-adts-sl}).
Extending our techniques to build a lemma synthesis technique/tool with built-in support for ADTs, especially to reason with functional programs, is an interesting future direction.

\section{Conclusions}\label{sec:conclusion}
The primary contribution of this paper is an inductive lemma synthesis technique for \folfp~ with background theories that learns from semantically-rich counterexample FO models that witness non-provability
Such a search for inductive lemmas based on the semantics of theorems/lemmas 
can be useful in other contexts--- e.g, in identifying lemmas from a large corpus to help prove theorems. For instance, 
the work in~\cite{bansal19} uses machine learning to find proofs, but currently little semantic information is used in learning. 
Extending our work to synthesizing lemmas for other logics, especially over ADTs (see Section~\ref{sec:related-work}) as well as Separation Logic is also interesting. 
We also believe that building general lemma synthesis engines that extend SMT solvers 
can be valuable for researchers who use 
automated theorem proving in a variety of application domains. 



\section*{Data Availability Statement}
The code and data artifacts required to reproduce the experiments on the \fossil~ tool and various ablation studies are available via ACM DL at~\cite{fossil-artifact}.

\section*{Acknowledgments} This work is supported in part by a research grant from Amazon and a Discovery Partners Institute (DPI) science team seed grant.

\bibliography{refs}

\newpage
\appendix
\section{Appendix}

\subsection{Further details for Section~\ref{sec:algorithm}}
\label{app:algorithm}

\noindent
\textbf{Proof of Theorem~\ref{thm:complete-independent-lemma}.}
\begin{proof}
Assume that there exists some set of independent lemmas $\{L_1, L_2, \ldots, L_n\}$ that proves $\alpha$. Let us fix $k$ and $h$ to be such that every $L_i$ as well as the goal (given the lemmas) is provable with a depth $k$ instantiation, and the maximum height of any of the productions in $\mathcal{G}$ that yield a lemma $L_i$ is $h$. We claim that $\fossil$ with parameters $k$ and $h$ will terminate having found a sequence of lemmas that prove $\alpha$.

We induct on the number $n$ of lemmas in the set. Since the algorithm is sound, if it terminates there is clearly a sequence of lemmas that proves $\alpha$. We establish that either the algorithm will terminate with a proof of the goal, or at least one $L_i, 1 \leq i \leq n$ will be eventually (at some finite time) chosen by the synthesis module, i.e., it cannot be that the algorithm restarts {\fossil} with new parameters in line~\ref{alg1:increase-depth} or runs forever without choosing one of the lemmas $L_i$. If some $L_i$ is chosen by the synthesis module, since we know by our choice of $k$ that $L_i$ is provable with depth $k$ instantiation, it will be added to $\Phi_\alpha$ (see line~\ref{alg1:valid-lemma-addition}) before all the variables are reset, which reduces the problem to discovering at most $n-1$ independent lemmas whereupon we will appeal to the induction on number of lemmas to be discovered.

It is clear from the definition of $\mathcal{G}_h$ that $\mathit{Lang}(\mathcal{G}_h)$ is finite for any $h$. Observe from the description of the algorithm in Section~\ref{sec:algorithm-independent-lemmas} that in each round the candidate proposal $L$ will either: (i) be prevented from being proposed again in the inner loop (line~\ref{alg1:inner-loop}) by the addition of a $\pfpmodel$ model, or (ii) be prevented from being proposed again permanently during the execution of \fossil~ (with parameters $k$ and $h$) because it was proved valid and added to $\Phi_\alpha$ or it was proved invalid using a $\lfpmodel$ model. This eliminates the possibility that the algorithm keeps on proposing lemmas that are not provable. It either finds a provably valid lemma, or it has no further candidate lemmas to propose, and thus would restart the algorithm with new parameters in  line~\ref{alg1:increase-depth}.

If it finds a valid lemma, the search space for the next round of lemma synthesis is reduced (because the discovered valid lemma will not be proposed anymore). So this can happen only finitely often.


This leaves us with the possibility that the algorithm reaches line~\ref{alg1:increase-depth} without finding a new candidate lemma. In particular, this means that none of the $L_i$ satisfies the constraints in line 8. We show that this cannot be the case, i.e., that at least one $L_i, 1 \leq i \leq n$ satisfies the constraints (and is therefore a viable proposal for the synthesis module).


It is easy to see that each $L_i, 1 \leq i \leq n$ satisfies constraints~\ref{alg1:truemodel-constraints} and~\ref{alg1:pfp-constraints} since the former constraint is satisfied by any lemma valid in the FO-lfp theory defined by $\axioms$ and $\recdef$, and the latter is satisfied by any lemma that is provable by induction at depth $k$. Both of these conditions are true of every $L_i$. This leaves us with constraint~\ref{alg1:falsemodel-constraints}. Assume for the sake of contradiction that no lemma satisfies the constraint, i.e., there is a model $M$ (namely the current $\falsemodel$ model) such that $M \models (\axioms \cup \recdef \cup \{\neg \alpha\} \cup \{L_i\})[T_k]$ for any $L_i, 1 \leq i \leq n$. This yields that $M \models (\axioms \cup \recdef \cup \{\neg \alpha\} \cup \{L_i | 1 \leq i \leq n\})[T_k]$, which contradicts our initial assumption that $\{L_1, \ldots, L_n\}$ collectively prove $\alpha$ at depth $k$, i.e., $(\axioms \cup \recdef \cup \{\neg \alpha\} \cup \{L_i | 1 \leq i \leq n\})[T_k]$ is unsatisfiable. Therefore some $L_i$ satisfies the constraint on line~\ref{alg1:falsemodel-constraints} and will eventually be proposed, which concludes our proof.
\end{proof}

\subsection{Lemma Synthesis Algorithms Relatively Complete wrt Sequential Lemmas}
\label{sec:algorithm-dependent-lemmas}

In this section we briefly discuss the problem of designing algorithms for sequential lemma synthesis that are also relatively complete wrt sequential lemmas (instead of just 
being relatively complete wrt independent lemmas as in Theorem~\ref{thm:complete-independent-lemma}). To do this we must first see why \fossil~ is not already complete for sequential lemmas. They key obstacle is the $\falsemodel$ model that makes the lemma synthesis goal-directed. Consider the following scenario:

\begin{example}[\fossil~ is not complete for sequence of lemmas]
\label{ex:seqlemmas-deadlock}
Consider the case where $\alpha$ can be proved using a sequence $(L_1,L_2)$ of two lemmas. Let  $L_1$ be provable on its own, $L_2$ be provable assuming $L_1$, and $\alpha$ is provable assuming $L_2$. At the beginning of the algorithm on line~\ref{alg1:false-model} in Figure~\ref{fig:alg1}, $L_2$ would be false on $\falsemodel$ since it helps prove $\alpha$. But there is nothing that prevents $L_1[T^*]$ from being true on $\falsemodel$, so let us suppose that it is true. If that is the case, then $L_2$ might be selected by the algorithm and then quickly dismissed since it cannot be proved valid without $L_1$. We would then add a counterexample for it on line~\ref{alg1:pfpmodel-addition} witnessing that $L_2$ has no inductive proof. However, the $\falsemodel$ model has not changed (we only recompute it when we find a valid lemma) and therefore $L_1$ will never be proposed as well. We cannot guarantee that a proof of $\alpha$ will be found by \fossil.
\end{example}


\noindent We propose three different strategies to address the above issue:
\begin{enumerate}
    \item The simplest way to achieve the relative completeness is to utilize \fossil~ as described in Figure~\ref{fig:alg1}, but eliminate constraints corresponding to $\falsemodel$ models. This eliminates the problem described in Example~\ref{ex:seqlemmas-deadlock} where we need to necessarily synthesize lemmas that help prove the goal, and instead reduces the algorithm to only generating lemma proposals and eliminating spurious proposals using $\lfpmodel$ and $\pfpmodel$ models. This approach has the obvious disadvantage of not being goal-directed and could lead to large execution time for proof if the sequence of lemmas needed consists of large lemmas (by size) and smaller lemmas could be easily eliminated if given the goal.
    
    \item A second approach is to have the algorithm branch into two-subroutines (both branches searched fairly, dovetailing between them) when given a lemma that is unprovable, one assuming that the lemma is valid and other assuming that it is not. We can then pursue each subroutine until we find a proof or reach a contradiction. However, this algorithm could quickly explode in the number of subroutines even with a few unprovable lemmas and likely impractical.
    
    \item We propose a third alternative that  generalises \fossil. Looking at the Example~\ref{ex:seqlemmas-deadlock}, it would be useful if we could update $\falsemodel$ to include the failure to prove $L_2$ so that the lemma synthesis is guided towards $L_1$. What should be the constraint with which we update the model? The updated model could be such that $L_2$ holds (on the instantiated terms), or it could witness that $L_2$ is not inductive, i.e., cannot be proved by induction. However, these two possibilities are precisely those expressed by the induction principle for $L_2$. Recall the definition from Section~\ref{sec:induction-proofs}: the induction principle of a lemma $L(\bar{x})$ is given by $(\forall \bar{x}. PFP(L(\bar{x}))) \rightarrow (\forall \bar{x}. L(\bar{x})) \equiv \neg(\forall \bar{x}. PFP(L(\bar{x}))) \lor (\forall \bar{x}. L(\bar{x}))$ where $PFP$ represents the condition that $L$ is inductive. Therefore the induction principle captures the two possibilities of $L$ either being valid or not inductive. Our third alternative proposal is thus to use the induction principle to address the problem of completeness for sequences of lemmas in an algorithm we call \fossilseq.
\end{enumerate}

\paragraph{\fossilseq} Let us discuss the third strategy in more detail. Simply put, we would like to add the induction principle for any lemmas that we cannot prove to our axioms and retain the rest of the algorithm. In particular, with respect to the algorithm description in Figure~\ref{fig:alg1} we would maintain a set $\ip$ of induction principles starting out with an empty set and include it in the construction of $\Phi_\alpha$ and $\Phi_L$ on lines~\ref{alg1:PhiAlpha} and~\ref{alg1:PhiL}. Then, given a proposal $L$ that we can neither prove nor establish as being invalid using a $\lfpmodel$ model (line~\ref{alg1:unprovable-lemma}), we would eliminate falling back to a $\pfpmodel$ model on lines~\ref{alg1:get-pfp-countermodel} and~\ref{alg1:pfpmodel-addition} and replace it with the update of $\ip$ with the induction principle of $L$. This algorithm, which we call \fossilseq, is relatively complete for the problem of sequential lemma synthesis:
\!\!\!\!\!\!
\begin{theorem}[Relative completeness of \fossilseq~ wrt sequential lemmas]
\label{thm:complete-induction-principle}
If $\alpha$ is provable from $\axioms$ and $\recdef$ by a finite sequence of inductive lemmas, then there is an instantiation depth $k$ and grammar height $h$ such that \fossilseq~ 
terminates and returns a set $\lemmas$ of lemmas and a set $\ip$ of induction principles proving $\alpha$.
\end{theorem}
We detail the formulation of proving a theorem using induction principles
in Appendix~\ref{sec:fossilseq}. 
Admittedly, 
our solution could still create large synthesis queries that could be difficult to handle and therefore has potential disadvantages as do the other two strategies.

\subsubsection{Discussion about induction principles and description of \fossilseq}
\label{sec:fossilseq}
As illustrated in Example~\ref{ex:seqlemmas-deadlock}, we cannot guarantee that {\fossil} finds a sequence of inductive lemmas proving the goal if such a sequence exists. We need the stronger assumption that a set of independent lemmas exists for proving the goal.

The algorithm {\fossilseq} is a modification of {\fossil} that is guaranteed to find a proof of the goal if it can be proven by a sequence of inductive lemmas. In addition to the sequence of valid lemmas that is constructed in a similar way as {\fossil} does, {\fossilseq} additionally
uses induction principles of lemmas for which it does not find an inductive proof. It might happen that these induction principles help proving $\alpha$ without the algorithm being able to prove the actual lemmas valid. 
We illustrate the  difference between induction principles and lemmas proving $\alpha$ for an (artificial) example situation.
\begin{example}
Consider the definition of $\textit{list}$ from above. Add two constants $c_1,c_2$ to the signature, and two recursive definitions $\textit{list}_1$ and $\textit{list}_2$:
\[
\begin{array}{rcl}
\textit{list}_1(x) & \lfpdef&
(x = nil) \vee ((\textit{list}_1(n(x)) \land (c_1 = c_2 \rightarrow x \not= c_1)) \\
\textit{list}_2(x) & \lfpdef&
(x = nil) \vee ((\textit{list}_2(n(x)) \land (c_1 \not= c_2 \rightarrow x \not= c_1))
\end{array}
\]
So both are defined as $\textit{list}$ with the only difference that the recursion stops at $c_1$ for $\textit{list}_1$ if $c_1 = c_2$, and for $\textit{list}_2$ if $c_1 \not= c_2$.

Take $\alpha = \forall x. \textit{list}(x) \rightarrow (\textit{list}_1(x) \lor \textit{list}_2(x))$. This is certainly true in LFP semantics because if $c_1 = c_2$, then $\textit{list}_2$ is the same as $\textit{list}$, otherwise $\textit{list}_1$ is the same as $\textit{list}$.
Consider the lemmas $L_1 = \forall x. \textit{list}(x) \rightarrow \textit{list}_1(x)$ and $L_2 = 
\forall x. \textit{list}(x) \rightarrow \textit{list}_2(x)$.
For each lemma, there are clearly LFP models in which the lemma does not hold (if $c_1 = c_2$ and $\textit{list}(c_1)$, then $L_1$ is false, similarly for $L_2$).
However, we have that $\axioms \cup \recdef \cup \{IP(L_1),IP(L_2)\} \modelsfo \alpha$ because on each model either $PFP(L_1)$ or $PFP(L_2)$ is satisfied.
\end{example}

This illustrates that provability by induction principles does not yield provability by the corresponding lemmas. The other direction, however, is always true, as stated in the following lemma.
\begin{lemma}
\label{lem:lemma-sequence-ip}
If $(L_1, \ldots, L_n)$ is a sequence of inductive lemmas that prove $\alpha$ then the set $\ip = \{IP(L_1), \ldots, IP(L_n)\}$ proves $\alpha$.
\end{lemma}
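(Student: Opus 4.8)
The plan is to unfold the statement ``$\ip$ proves $\alpha$'' to its defining condition. Taking $\lemmas = \emptyset$ in the definition of ``$\lemmas$ and $\ip$ prove $\alpha$,'' this amounts exactly to $\axioms \cup \recdef \cup \ip \modelsfo \alpha$ (the LFP-validity clause is vacuous since there are no lemmas in $\lemmas$). So I would argue semantically, model by model: fix an arbitrary FO model $M$ that satisfies $\axioms$, interprets the recursive definitions in $\recdef$ as FO fixpoint equations, and satisfies every induction principle in $\ip = \{IP(L_1), \ldots, IP(L_n)\}$. The goal then reduces to $M \models \alpha$.

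The heart of the argument is an induction on $i$ showing that $M \models L_i$ for each $i \in \{1, \ldots, n\}$. Since $IP(L_i) = PFP(L_i) \rightarrow L_i$ and $M \models IP(L_i)$ by assumption, it suffices to establish $M \models PFP(L_i)$ and then discharge the implication. To obtain $M \models PFP(L_i)$ I would invoke the hypothesis that $(L_1, \ldots, L_n)$ is a sequence of inductive lemmas, which by definition gives $\axioms \cup \recdef \cup \{L_1, \ldots, L_{i-1}\} \modelsfo PFP(L_i)$. The key observation is that $M$ is among the FO models quantified over by this $\modelsfo$ statement: $M$ satisfies $\axioms$, interprets $\recdef$ as FO fixpoints, and---by the induction hypothesis---satisfies $L_1, \ldots, L_{i-1}$. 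Hence $M \models PFP(L_i)$, and therefore $M \models L_i$. The base case $i = 1$ is the special instance where the premise specializes to $\axioms \cup \recdef \modelsfo PFP(L_1)$, which $M$ satisfies with empty induction hypothesis.

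Once $M \models L_i$ holds for all $i$, the conclusion follows immediately from the remaining part of the hypothesis, namely that the lemmas prove $\alpha$, which gives $\axioms \cup \recdef \cup \{L_1, \ldots, L_n\} \modelsfo \alpha$. Since $M$ is an FO model of $\axioms \cup \recdef$ that also satisfies all of $L_1, \ldots, L_n$, we get $M \models \alpha$. As $M$ was arbitrary, $\axioms \cup \recdef \cup \ip \modelsfo \alpha$, which is exactly the claim. Note that this argument stays entirely within FO models and uses only the $\modelsfo$ parts of the hypothesis; in particular it needs neither the LFP semantics nor the first (unnamed) lemma.

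The only delicate point I expect is the bookkeeping around the $\modelsfo$ relation: each premise asserts truth in \emph{all} FO models of a certain assumption set, so to apply it to the concrete model $M$ one must first certify that $M$ belongs to that class. This is precisely why the induction must be ordered so that $M \models L_1, \ldots, L_{i-1}$ is available before invoking the $PFP(L_i)$ premise---the induction hypothesis is exactly what places $M$ inside the model class for which $PFP(L_i)$ is guaranteed. Everything else is a direct unfolding of the definitions of $IP$, $PFP$, and the inductive-sequence condition.
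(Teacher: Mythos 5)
Your proof is correct and is essentially the paper's argument: the paper works contrapositively, taking a countermodel $M$ of $\axioms \cup \recdef \cup \ip \cup \{\neg\alpha\}$ and deriving a contradiction at the \emph{smallest} index $i$ with $M \not\models L_i$, which is exactly the dual of your direct induction establishing $M \models L_i$ for all $i$ via $IP(L_i)$ and the inductive-sequence hypothesis. The only difference is presentational (least counterexample versus induction on the index), so there is nothing to add.
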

\begin{proof}
If $\ip$ does not prove $\alpha$, then there is a model $M$ of $\axioms \cup \recdef \cup \ip \cup \{\neg \alpha\}$ (in the FO semantics). Since the lemmas from the sequence $(L_1, \ldots, L_n)$ prove $\alpha$, one of the lemmas $L_i$ has to be false on $M$. Since $IP(L_i)$ is true on $M$, we obtain that $PFP(L_i)$ is false on $M$. If $i$ is the smallest index such that $L_i$ is false on $M$, then we get a contradiction to the fact that $(L_1, \ldots, L_n)$ is an inductive sequence of lemmas, and hence $\axioms \cup \recdef \cup \{L_1, \ldots,L_{i-1}\} \modelsfo PFP(L_i)$.
\end{proof}

\normalsize

\small
\begin{table*}
	\centering
	\caption{Sample valid lemmas synthesized and proven correct by our tool.}
	\label{tbl:valid-lemmas-1}
	\small
	\begin{tabular}{| p{2.6cm} | p{10.0cm} |} 
		\hline
		\multicolumn{1}{|c|}{Theorem} & \multicolumn{1}{c|}{Valid Lemmas} \\ 
		\hhline{|==|}
		\hline dlist-list & $\dlist(x) \Rightarrow \lst(x)$ \\
\hline slist-list & $\slist(x) \Rightarrow \lst(x)$ \\
\hline sdlist-dlist & $\sdlist(x) \Rightarrow \dlist(x)$ \\
\hline sdlist-dlist-slist & $\sdlist(x) \Rightarrow \dlist(x)$ \\
& $\sdlist(x) \Rightarrow \slist(x)$ \\
\hline listlen-list & $\lst(v, l) \Rightarrow \lst(x)$ \\
\hline even-list & $\mathit{even\mathrm{-}lst}(x) \Rightarrow \lst(x)$ \\
\hline odd-list & $\mathit{odd\mathrm{-}lst}(x) \Rightarrow \lst(x)$ \\
\hline list-even-or-odd & $\mathit{even\mathrm{-}lst}(x) \Rightarrow (n(x) \neq \nil \Rightarrow \mathit{odd\mathrm{-}lst}(n(x)))$ \\
& $\mathit{odd\mathrm{-}lst}(x) \Rightarrow \mathit{even\mathrm{-}lst}(n(x))$ \\
& $\lst(x) \Rightarrow ((\mathit{even\mathrm{-}lst}(\nxt(x)) \Rightarrow \mathit{false}) \Rightarrow  \mathit{even\mathrm{-}lst}(x))$ \\
\hline lseg-list & $\lseg(x, y) \Rightarrow (\lst(x) \Rightarrow \lst(y))$ \\
\hline lseg-next & $\lseg(x, y) \Rightarrow (\lseg(y, z) \Rightarrow \lseg(x, z))$ \\
\hline lseg-next-dyn & $\mathit{lsegy}(x) \Rightarrow \mathit{lsegz}_p(x)$ \\
\hline lseg-trans & $\lseg(x, y) \Rightarrow (\lseg(y, z) \Rightarrow \lseg(x, z))$ \\
\hline lseg-trans2 & $\lseg(x, y) \Rightarrow (\lseg(y, z) \Rightarrow lseg(x, z))$ \\
\hline lseg-ext & $\lseg(x, y) \Rightarrow (\lseg(y, z) \vee (\lseg(x, z) \Rightarrow \lseg(z, y)))$ \\
\hline lseg-nil-list & $\lseg(x, y) \Rightarrow (\lst(y) \Rightarrow \lst(x))$ \\
\hline slseg-nil-slist & $\slseg(x, y) \Rightarrow (\slst(y) \Rightarrow \slst(x))$ \\
\hline list-hlist-list & $\lst(x) \Rightarrow (y \in \hlist(x) \Rightarrow \lst(y))$ \\
\hline list-hlist-lseg & $\lst(x) \Rightarrow (y \in \hlist(x) \Rightarrow \lseg(x, y))$ \\
\hline list-lseg-keys & $\lseg(x, y) \Rightarrow (k \in \keys(y) \Rightarrow k \in \keys(x))$ \\
\hline list-lseg-keys2 & $\lseg(x, y) \Rightarrow (\lst(x) \Rightarrow \lst(y))$ \\
& $\lseg(x, y) \Rightarrow (k \in \keys(y) \Rightarrow k \in \keys(x))$ \\
\hline rlist-list & $\rlist(x) \Rightarrow \lst(x)$ \\
\hline rlist-black-height & $\rlist(x) \Rightarrow \mathit{red\mathrm{-}height}(n(x)) \leq \mathit{black\mathrm{-}height}(x) + 1$ \\
& $\rlist(x) \Rightarrow 1 \leq \mathit{red\mathrm{-}height}(n(x)) + 1$ \\
& $\rlist(x) \Rightarrow \mathit{red\mathrm{-}height}(x) = 1 + \mathit{black\mathrm{-}height}(n(x))$ \\
& $\rlist(x) \Rightarrow \mathit{black\mathrm{-}height}(n(x)) + \mathit{black\mathrm{-}height}(x) \leq \mathit{red\mathrm{-}height}(n(x)) + \mathit{black\mathrm{-}height}(n(x))$ \\
\hline rlist-red-height & $\rlist(x) \Rightarrow \mathit{red\mathrm{-}height}(x) = 1 + \mathit{black\mathrm{-}height}(n(x))$ \\
& $\rlist(x) \Rightarrow (\mathit{black}(x) \Rightarrow \mathit{red}(\nxt(x))$ \\
& $\rlist(x) \Rightarrow 1 \leq \mathit{red\mathrm{-}height}(x)$ \\
\hline cyclic-next & $\lseg(x, y) \Rightarrow \lseg(n(x), n(y))$ \\
\hline tree-dag & $\tree(x) \Rightarrow \dagraph(x)$ \\
\hline bst-tree & $\bst(x) \Rightarrow tree(x)$ \\
\hline maxheap-dag & $\maxheap(x) \Rightarrow \dagraph(x)$ \\
\hline maxheap-tree & $\maxheap(x) \Rightarrow \tree(x)$ \\
\hline tree-p-tree & $\tree_p(x) \Rightarrow \tree(x)$ \\
\hline
	\end{tabular}
\end{table*}
\begin{table*}
	\centering
	\caption{Sample valid lemmas synthesized and proven correct by our tool.}
	\label{tbl:valid-lemmas-2}
	\begin{tabular}{| p{2.6cm} | p{10.0cm} |} 
		\hline
		\multicolumn{1}{|c|}{Theorem} & \multicolumn{1}{c|}{Valid Lemmas} \\ 
		\hhline{|==|}
\hline tree-p-reach & $\reach(x, y) \Rightarrow (\tree_p(x) \Rightarrow \tree_p(y))$ \\
\hline tree-p-reach-tree & $\tree_p(x) \Rightarrow \tree(x)$ \\
& $\reach(x, y) \Rightarrow (\tree_p(x) \Rightarrow \tree(y))$ \\
& $\reach(x, y) \Rightarrow (y \neq \nil \Rightarrow y \in \htree(x))$ \\
\hline tree-reach & $\reach(x, y) \Rightarrow (\tree(x) \Rightarrow \tree(y))$ \\
\hline tree-reach2 & $\reach(x, y) \Rightarrow (\tree(x) \Rightarrow \tree(y))$ \\
\hline dag-reach & $\reach(x, y) \Rightarrow (\dagraph(x) \Rightarrow \dagraph(y))$ \\
\hline dag-reach2 & $\reach(x, y) \Rightarrow (\dagraph(x) \Rightarrow \dagraph(y))$ \\
\hline reach-left-right & $\reach(x, y) \Rightarrow (y \in \htree(y) \Rightarrow x \in \htree(x))$ \\
& $\reach(x, y) \Rightarrow (y \in \htree(y) \Rightarrow y \in \htree(x))$ \\
& $\reach(x, y) \Rightarrow (\tree(x) \Rightarrow \tree(y))$ \\
& $\tree(x) \Rightarrow (y \in \htree(x) \Rightarrow \tree(y))$ \\
\hline bst-left & $\bst(x) \Rightarrow (k \in \keys(x) \Rightarrow \minr(x) \leq k)$ \\
\hline bst-right & $\bst(x) \Rightarrow (k \in \keys(x) \Rightarrow k \leq \maxr(x))$ \\
\hline bst-leftmost & $\bst(x) \Rightarrow \minr(\leftmost(x)) = \minr(x)$ \\
& $\bst(x) \Rightarrow \maxr(\leftmost(x)) \leq \maxr(x)$ \\
& $\bst(x) \Rightarrow \bst(\leftmost(x))$ \\
& $\bst(x) \Rightarrow \key(\leftmost(x)) \leq \key(x)$ \\
& $\bst(x) \Rightarrow (x \neq \nil \Rightarrow \leftmost(x) \neq \nil)$ \\
& $\bst(x) \Rightarrow ((\bst(\leftmost(x)) \Rightarrow x \in \hbst(x)) \Rightarrow \key(\leftmost(x)) \leq \minr(\leftmost(x)))$ \\

\hline bst-left-right & $\bst(x) \Rightarrow (y \in \hbst(x) \Rightarrow \minr(x) \leq \minr(y))$ \\
& $\bst(x) \Rightarrow (y \in \hbst(x) \Rightarrow \maxr(y) \leq \maxr(x))$ \\
& $\bst(x) \Rightarrow (y \in \hbst(x) \Rightarrow \bst(y))$ \\
& $\bst(x) \Rightarrow (y \in \hbst(x) \Rightarrow y \neq \nil)$ \\
& $\bst(x) \Rightarrow (\minr(x) \leq \maxr(y) \Rightarrow y \in \hbst(y))$ \\
& $\bst(x) \Rightarrow (\minr(y) \leq \maxr(x) \Rightarrow (\bst(y) \Rightarrow y \in \hbst(y)))$ \\
\hline bst-maximal & $\bst(x) \Rightarrow (y \in \hbst(x) \Rightarrow \bst(y))$ \\
\hline bst-minimal & $\bst(x) \Rightarrow (y \in \hbst(x) \Rightarrow \bst(y))$ \\
\hline maxheap-htree-key & $\maxheap(x) \Rightarrow (y \in \htree(x) \Rightarrow \key(y) \leq \key(x))$ \\
& $\maxheap(x) \Rightarrow (y \in \htree(x) \Rightarrow \maxheap(y))$ \\
\hline maxheap-keys & $\maxheap(x) \Rightarrow (k \in \keys(x) \Rightarrow k \leq \key(x))$ \\
\hline reachability & $\reach(z) \Rightarrow (c = y(z) \vee n(x(z)) = n(y(z)))$ \\
\hline reachability2 & $\reach(z) \Rightarrow y(z) = x(z)$ \\
\hline reachability3 & $\reach(z) \Rightarrow x(z) = y(z)$ \\
\hline reachability4 & $\reach(z) \Rightarrow y(z) = x(z)$ \\
\hline reachability5 & $\reach(z) \Rightarrow (n(y(z)) = x(z) \vee y(z) = c)$ \\
\hline reachability6 & $\reach(z) \Rightarrow n(y(z)) = x(z)$ \\
\hline
	\end{tabular}
\end{table*}

\normalsize
\subsection{Lemmas Proved}
\label{sec:apendix-all-lemmas}

Tables~\ref{tbl:valid-lemmas-1} and~\ref{tbl:valid-lemmas-2} represent all the lemmas proved valid by our tool. All variables ($x, y, z, k$, etc.) are implicitly universally quantified. Notably, different runs of our tool may produce different valid lemmas. Additionally, not all lemmas are guaranteed to be useful in proving the given theorem.

\end{document}